\documentclass[accepted]{uai2023} 

\usepackage[british]{babel}

\usepackage{natbib} 
    \bibliographystyle{plainnat}
    
\usepackage{mathtools} 
\usepackage{booktabs} 
\usepackage{tikz} 
\usepackage{amsmath}
\usepackage{amsthm}
\usepackage{amssymb}
\usepackage{algpseudocode}
\usepackage{algorithm}
\usepackage{subcaption}


\newcommand{\E}{\mathbf{E}}
\newcommand{\EXP}{\mathrm{Exp}}

\newcommand{\Mlay}{{\mathrm{LAY}}}
\newcommand{\Mrec}{{\mathrm{REC}}}

\newcommand{\EMD}{{\mathrm{EMD}}}
\renewcommand{\Pr}{\mathrm{Pr}}
\newcommand{\R}{\mathbb{R}}

\newcommand{\BibTeX}{\rm B\kern-.05em{\sc i\kern-.025em b}\kern-.08em\TeX}
\newtheorem{theorem}{Theorem}[section]
\newtheorem{definition}[theorem]{Definition}
\newtheorem{lemma}[theorem]{Lemma}

\newtheorem{example}[theorem]{Example}
\newtheorem{corollary}[theorem]{Corollary}

\title{Differentially Private Diffusion Auction: The Single-unit Case}

%
%

\author[1]{Fengjuan Jia}
\author[1]{Mengxiao Zhang}
\author[2]{Jiamou Liu}
\author[1]{Bakh Khoussainov}
\affil[1]{School of Computer Science and Engineering, University of Electronic Science and Technology of China}
\affil[2]{School of Computer Science, The University of Auckland}

\begin{document}
\maketitle

\begin{abstract}
Diffusion auction refers to an emerging paradigm of online marketplace where an auctioneer utilises a social network to attract potential buyers.  Diffusion auction poses significant privacy risks. From the auction outcome, it is possible to infer hidden, and potentially sensitive, preferences of buyers. To mitigate such risks, we initiate the study of differential privacy (DP) in diffusion auction mechanisms. DP is a well-established notion of privacy that protects a system against inference attacks. Achieving DP in diffusion auctions is non-trivial as the well-designed auction rules are required to incentivise the buyers to truthfully report their neighbourhood. We study the single-unit case and design two differentially private diffusion mechanisms (DPDMs): recursive DPDM and layered DPDM. We prove that these mechanisms guarantee differential privacy, incentive compatibility and individual rationality for both valuations and neighbourhood. We then empirically compare their performance on real and synthetic datasets.
\end{abstract}

\section{Introduction}


New technological shift in AI and data science has given rise to an imminent need to address data privacy issues in online platforms. Indeed, a Gartner survey shows that $41\%$ of the surveyed organisations have experienced a privacy breach or security incident\footnote{https://blogs.gartner.com/avivah-litan/2022/08/05/ai-models-under-attack-conventional-controls-are-not-enough/}. Data privacy issues have been especially serious and impactful around the use of social commerce platforms such as Instagram and Facebook. As users of such a platform find, browse and buy products through the social network, they are also exposed to a significant risk of privacy leakage. 
A recent PCI Pal survey shows that fewer than $7\%$ of users are confident about their data security on social commerce sites\footnote{https://www.pcipal.com/knowledge-centre/resource/fewer-than-10-of-people-are-confident-about-their-data-security-on-social-media-according-to-survey-from-pci-pal/}. Thus designing new tools to facilitate safe and private use of social commerce platforms is of crucial importance.  


Auction is important  in facilitating online commerce.  Auctions have been applied in many contexts,  e.g.,  radio spectrum, sponsored search ads, virtual  resource allocation. In an auction, buyers  submit their (private) valuations in bids to the auctioneer.  The bids often
imply buyers' preferences and confidential business strategies, and competitors may exploit them to gain an advantage. Hence, there is a need to protect the privacy of bid information. 
%
The privacy issues in auctions have recently been studied in \citep{mcsherry2007mechanism,Lin2018Frameworks,zhu2014differentially,ni2021differentially}. To mitigate privacy risks, these studies employ the well-established notion of {\em differential privacy} (DP) \citep{dwork2006calibrating}. Here, DP is used to protect individual's bid information when the auction outcome is published.  
To achieve DP on bids, the work of \cite{mcsherry2007mechanism} proposed {\em exponential mechanism}.  The mechanism randomises auction results so that a change in a buyer's bid does not significantly affect the auction outcome. In this way, the mechanism prevents the bid from being inferred from the auction outcome. 
This mechanism has so far been a predominant method to protect privacy in auctions.


{\em Diffusion auction} is an emerging form of auction. 
In this setting, a seller is able to harness the power of social network to diffuse auction information, inviting friends, friends-of-friends, etc., to join the auction, thereby attracting a large number of potential buyers. This differs from a standard auction (without social network) where the participants are fixed beforehand. Thus, diffusion auction are especially suitable for facilitating online social commerce platforms where the social network plays a prominent role. 
A challenge in diffusion auctions lies in resolving the conflict between the seller who wants to attract more participants for better revenue and the buyers who are reluctant to invite their friends to avoid competition. Thus there is a need to extend {\em incentive compatibility} (IC) for hidden valuations in classical auctions, to {\em diffusion IC} for hidden valuation as well as social ties. Numerous studies, e.g., \citep{li2017mechanism,li2019diffusion,zhang2020incentivize,zhang2020redistribution}, have proposed mechanisms for diffusion auction that achieve diffusion IC. 

Diffusion  auctions are prone to all aforementioned privacy risks for auctions in general.  
However, no study has focused on the privacy issues for diffusion auctions. 
Here we close this gap by investigating the following question: 
\vspace{-0.5cm}
\begin{quote}
    {\em How do we design a differentially private diffusion mechanism (DPDM) that guarantees desirable properties and preserves valuation privacy?}
\end{quote}
\vspace{-0.5cm}
Answering this question is not a trivial task. As mentioned above, the exponential mechanism is the main approach to ensure DP for auctions. An exponential mechanism firstly creates a probability distribution over all possible auction results such that more preferable result is associated with a higher probability, and then outputs an auction result according to the distribution.  
However, this mechanism can not be directly extended to diffusion auctions as it fails to ensure diffusion IC property. For instance, run the exponential mechanism to the scenario in Figure~\ref{fig:inference} (See Example~\ref{exa:exponential} for a detailed implementation). Assume that all buyers except buyer $b$  reveal their neighbours truthfully. From $b$'s perspective, revealing her neighbour $f$ means getting a lower probability of winning the auction, as the exponential mechanism would distribute the winning probabilities over $7$ buyers instead of $5$. Therefore, the buyers are not incentivised to diffuse auction information to their friends. 

\begin{figure}
    \centering
    \includegraphics[width=0.26\textwidth]{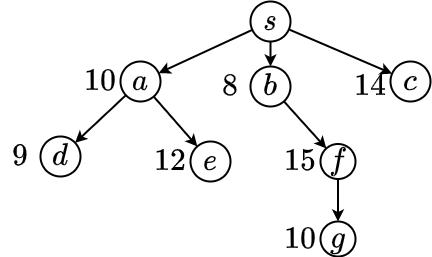}
    \caption{\small A social network with a seller $s$ and seven buyers. 
    The number beside each node is the valuations of the buyer. The seller $s$ has an item to sell, and initially knows only $a,b,c$. 
    The mechanism will construct a probability distribution over potential buyers which determines how likely a buyer is to win the item.}
    \label{fig:inference}
\end{figure}

\smallskip

\noindent {\bf Contribution.} In this paper, we design a DPDM for the {\em single-unit auction} case where a single seller sells one indivisible item to multiple potential buyers. The seller and the buyers are assumed to be nodes in a social network with their connections represented as 
edges. The seller initially only has access to her direct neighbours, and must incentivise the buyers to truthfully report both their valuations of the item, and their neighbourhood. 
%
At the same time, the DPDM should ensure the DP property for buyers' bids. 
We design two DPDMs: {\bf recursive DPDM} and {\bf layered DPDM}. The idea for these two mechanisms is {\em market division} that partitions the buyers into sub-markets. The mechanism then associates a probability with each sub-market. To ensure diffusion IC, the probability should be monotonic on the size of the sub-markets:
\vspace{-0.3cm}
\begin{itemize}[leftmargin=*]
    \item The recursive DPDM maps the network into a tree that captures information flow among buyers. Then it recursively divides the market such that each sub-tree is a sub-market and its probability is non-decreasing on the size of the sub-tree. 
    \item The layered DPDM also relies on the tree above, except the market is not partitioned by sub-trees, but rather by buyers' distances from the seller. In this way, each layer is a sub-market and its probability is fixed. 
\end{itemize}
\vspace{-0.3cm}
These two mechanisms are proven to meet  all the desirable properties. The layered DPDM has a lower bound on expected social welfare. The recursive DPDM achieves a better social welfare empirically. We demonstrate this using a series of experiments that simulate diffusion auctions over three real-world social network datasets. 
Our experiments reveal that in most cases, the recursive DPDM reaches comparable social welfare as the theoretical upper bound. We now highlight our contributions: 

\begin{enumerate}[leftmargin=*]
    \item We  expand diffusion mechanisms adding the DP condition. This builds a bridge between diffusion auctions and privacy preservation. See Section \ref{sec:privacy}.  
    \item Using the idea of market division, we design  recursive DPDM (Section~\ref{sec:recursive}) and layered DPDM (Section~\ref{sec:layered}). These mechanisms are IC and differentially private. 
    \item We empirically evaluate our two mechnaisms 
    on real-world network datasets. See Section~\ref{sec:experiment}.
\end{enumerate}

	
\vspace{-0.3cm}

\section{Related work}

\paragraph*{\bf Differentially private mechanism.}
Differential privacy (DP) is proposed to protect individual data from inference attacks on an aggregate query over a database \citep{dwork2006calibrating}. The notion has since been extended to various domain such as statistical data inference \citep{dwork2008differential}, decision trees \citep{fletcher2019decision}, and unstructured data \citep{zhao2022survey}. \citet{mcsherry2007mechanism} extend DP to auctions and propose exponential mechanism.
This mechanism ensures a weaker version of IC, namely approximate IC, which ensures that any user can only gain a bounded extra utility from misreporting. This solution concept is adopted in subsequent studies \citep{zhu2014differentially} and \citep{Diana2020Differentially} on multi-item auctions and one-shot double auctions. 
As approximate IC allows bidders to have non-zero incentives to lie, these methods would not meet the requirements in our problem.  

Many works design DP auctions that ensure traditional version of IC \citep{Huang2012TheEM,Xiao2013IsPC,zhu2015differentially,Lin2018Frameworks}. Specifically, \citet{Huang2012TheEM,Xiao2013IsPC} propose general methods to transform a classical IC mechanism to a privacy preserving counterpart that is still IC. The method of \citep{Xiao2013IsPC} works only when the valuation space is small and can not be applied to general problems, including ours. In contrast, the transformation method in \citep{Huang2012TheEM} can be applied to more general problems. The transformed mechanism can be seen as a generalisation of Vickrey-Clarke-Groves (VCG) mechanism \citep{groves1973incentives}, which is paired with a carefully designed payment rule. However, when the mechanism is applied to multi-item auctions, it is approximately IC rather than IC. Later, \citet{zhu2015differentially} and \citet{Xu2017PADS,Lin2018Frameworks} propose mechanisms that combine the exponential mechanism with the payment rule in \citep{Archer2001TruthfulMF}, applying to combinatorial auctions and reverse auctions, respectively. 

No mechanism above can be applied to our problem of designing DPDM because they fail to ensure diffusion IC. We next introduce existing  diffusion auction mechanisms.  

\paragraph*{\bf Diffusion auction.} Diffusion auction is an emerging topic in mechanism design. 
\citet{li2017mechanism} are the first to investigate diffusion auction and propose information diffusion mechanism (IDM), a mechanism for single-unit auction in a social network. The basic idea is to give monetary reward to buyers who are critical to diffusion, and it ensures diffusion IC. Following this idea, \citet{li2019diffusion,zhang2020incentivize,zhang2020redistribution} further study single-unit diffusion auction from different aspects. 
Later, \citet{zhao2018selling,kawasaki2020strategy} extend single-unit diffusion auctions to multiple-unit cases and propose generalised IDM (GIDM) and DNA-MU, resp. 
However, all of these mechanisms are deterministic and suffer from privacy leakage risks. 
%

\section{Problem formulation}
\label{sec:problem}

\subsection{Preliminaries}
Consider the following setup: There is a seller, denoted by $s$, and $n$ buyers, denoted by $N=\{1,2,3,...,n\}$. Seller $s$ has a single indivisible item to sell. Each buyer $i\in N$ is willing to buy the item and attaches a {\em valuation} $v_i$ to the item. Valuation $v_i$ is the maximum amount of money that $i$ is willing to pay. 
This value is private to the seller. 

The seller and the buyers form a social network, represented by a graph $G=(V,E)$, where $V=N \cup \{s\}$ is the vertex set and $E\subseteq V^2$ is the edge set. Each node $i\in V$ has a neighbour set, denoted by $r_i\coloneqq \{j\in V \mid (i,j)\in E\}$. 
We assume that only the seller's neighbours know the auction information initially. The seller would like to attract more buyers to participate in the auction and spread the auction information. Each buyer is able to deliver the auction information to her neighbours. The set $r_i$ is also a private information of buyer $i$. 

Each buyer $i \in N$, once informed with the auction, can participate in the auction. Also, for each buyer a pair consisting of her valuation and neighbour set is called the {\em profile} of the buyer. We use $\theta_i\coloneqq (v_i,r_i)$ to denote this profile. 
The profile is known to the buyer and it is hidden to anyone else. Let $\Theta$ denote the set of all possible profiles. Also, we let $\theta \coloneqq (\theta_1,\ldots,\theta_n)$ be the {\em global profile} of all buyers and $\theta_{-i} \coloneqq (\theta_1,\ldots \theta_{i-1},\theta_{i+1},\ldots \theta_n)$ be the global profile of all buyers except for $i$. 
In the auction, each buyer is asked to report her profile $\theta_i'=(v_i',r_i')$, which is not necessarily the true one. We define $\theta'\in \Theta^n$ as the reported global profile of all buyers. 
Given $\theta'$, we construct a directed graph $G_{\theta'}=(V_{\theta'},E_{\theta'})$: add a directed edge $(i,j)$ if $j$ is reported by $i$ as a neighbour. We call such graph {\em profile digraph}. 

Diffusion auction has two forms of information asymmetry:
    {\bf (1) Valuation asymmetry.} The buyers' true valuations are private information and hidden from the seller. Thus buyers have an advantage over the seller as they can misreport their valuations. The auction should prevent misreporting of valuation through appropriate allocation and pricing strategies. 
    {\bf (2) Neighbourhood asymmetry.} By Bulow-Klemperer theorem, the revenue of an auction increases as the number of buyers grows \citep{bulow1996auctions}. However, as buyers' neighbours on the social network are hidden, the seller would hope the buyers to diffuse the auction information to their neighbours to allow more participants to join. However, being rational, the buyers are not necessarily willing to disseminate the auction information as this may hinder their own chance of winning. 
    Here, we follow the standard convention 
    and assume that the reported neighbour set $r_i'$ is a subset of $r_i$.

Diffusion auction mechanisms are designed to address these two challenges. Now we give the definition of a mechanism. A mechanism, denoted by $M$, takes the reported global profile $\theta'$ of all buyers as input, and determines who is allocated the item and how much to pay.  

\begin{definition}
    A {\em mechanism} $M$ consists of two functions $(\pi(\cdot),$ $p(\cdot))$, where $\pi \colon \Theta^n \to \{0,1\}^n$ is an {\em allocation function} and $p \colon \Theta^n \to \R^n$ is a {\em payment function}. 
\end{definition}

The allocation function determines whether the buyers get the item while the payment function determines the amount of money that the buyers need to pay.
Given a reported global profile $\theta'$ of all buyers, we write the {\em allocation result} $\pi(\theta')$ as $(\pi_1(\theta'),\ldots,$ $\pi_n(\theta'))$ and the {\em payment result} $p(\theta')$ as $(p_1(\theta'),\ldots,p_n(\theta'))$, where $\pi_i(\theta')$ and $p_i(\theta')$ is buyer $i$'s allocation and payment. The {\em utility} of buyer $i$ with profile $\theta_i=(v_i,r_i)$ is $u_i (\theta') = v_i\pi_i(\theta') - p_i(\theta')$ when reported global profile is $\theta'$. 
The {\em social welfare} of mechanism $M$ on $\theta'$, denoted by $sw_M(\theta')$, is defined as the sum of the seller and the buyers' utility, i.e., $sw_M(\theta')=\sum_{i\in V } u_i(\theta')$. We aim to maximise the social welfare. 




\subsection{Privacy-aware diffusion auction}
\label{sec:privacy}

In addition to Challenges (1) and (2) above, we consider a third challenge in diffusion auction when the buyers are privacy-aware. 
    {\bf (3) Valuation privacy.} 
    Once the auction result is annouced, an attacker with certain background information may infer the bid information from the published auction result. 
    This is known as the {\em inference attack} \citep{li2017mechanism}. 
    This disadvantages the buyer(s) whose private valuation is diclosed.
    Therefore, the buyers require the guarantees that their private valuations are protected. 


To achieve privacy preservation, we apply a randomised mechanism $M$ to implement an auction on the reported global profile. 

\begin{definition}
A {\em randomised mechanism $M$} is one that, given a global profile $\theta$, outputs a pair $(\pi,p)$ such that $\pi$ is a randomised allocation function and $p$ is a randomised payment function.
\end{definition}
 
\noindent Given a global profile $\theta$, the randomised mechanism $M$ outputs $\pi(\theta)$ and $p(\theta)$ such that $\pi(\theta)$ is a random variable with possible values $\{0,1\}^n$ and $p(\theta)$ is a random variable with possible values $(\R^+)^n$. 
%
We use the concept of differential privacy to define the privacy protection of a mechanism. Basically, differential privacy requires that the distributions over the outcomes are nearly identical when the global profiles are nearly identical. The privacy protection level is measured by a privacy parameter $\epsilon \in \R^+$. 

\begin{definition}
A randomised mechanism $M$ is {\em  $\epsilon$-differential privacy} ($\epsilon$-DP) if for any two global profiles $\theta,\theta'\in \Theta^n$ that differ on a single buyer's valuation, and for any  possible outcome $o \in O$,
\begin{equation}
\label{eqn:DP}
    \Pr[M(\theta)= o]\leq \exp(\epsilon)\Pr[M(\theta')= o]
\end{equation}
\end{definition}

Eqn.~(\ref{eqn:DP}) shows if any buyer $i$ changes her reported profile from  $\theta_i=(v_i,r_i)$ to $\theta_i'=(v_i',r_i)$, the auction outcome does not change too much. Therefore, no one could infer the valuation of any buyer from the randomised outcome. 

{\em Exponential mechanism} \citep{mcsherry2007mechanism} is an existing mechanism that ensures $\epsilon$-DP for valuation privacy. 
Given a global profile, an exponential mechanism creates a distribution over all possible auction outcomes, and outputs an outcome according to the distribution. Intuitively, the higher a reported valuation is, the more likely the corresponding buyer is selected as a winner. 
Specially, given a global profile $\theta$, define a {\em score function} $\sigma \colon \Theta^n \times O \to \R$ that assigns a real valued score to each pair $(\theta,o)$ from $\Theta^n \times O$. The more preferable an outcome is, the higher the score of the outcome is. An exponential mechanism $M(\theta)$ outputs a result $o^* \in O$ with probability 
\[
\frac{\exp(\epsilon \sigma(\theta,o^*))}{\sum_{o \in O} \exp(\epsilon \sigma(\theta,o))}
\]
In our problem, a result corresponds to that a certain buyer $i$ wins, and we use $o_i$ to denote this result.   

In randomised mechanisms, we assume that the buyers are risk-neutral and care about their utilities in expectation. We use $\E_{M}[u_i(\cdot)]$ to denote $i$'s expected utility in $M$ and redefine the standard IC and IR properties by expected utility. 

\begin{definition}
Let $M$ be a randomised mechanism,
\begin{itemize}[leftmargin=*]
    \item The mechanism $M$ is IC if for all $i\in N$, all $\theta_i, \theta_i' \in \Theta$ and for all $\theta_{-i}',\theta_{-i}''\in \Theta^{n-1}$, we have the following,  $\E_{M}[u_i((\theta_i,\theta_{-i}'))] \geq \E_{M}[u_i((\theta_i',\theta_{-i}''))].$ 
    \item The mechanism $M$ is IR if for all $i \in N$ and all $\theta_{-i}'\in \Theta^{n-1}$, we have 
    $\E_{M}[u_i((\theta_i,\theta_{-i}'))]\geq 0.$
\end{itemize}
\end{definition}
The IR and IC properties ensure that buyers are willing to participate in the auction and to reveal their true valuations and neighbours, as they are rational and doing so leads to the best expected utilities. Hence, information asymmetry issues can be addressed. 

\noindent The social welfare of $M$ is also in expectation, i.e.,  $$\E_{M}[sw_M(\theta)]=\sum_{i\in V} \E_{M}[u_i(\theta)].$$ 

We aim to design a randomised mechanism that is IC, IR, $\epsilon$-DP (for reasonable $\epsilon$) while maximising social welfare. 

\section{Recursive DPDM}
\label{sec:recursive}

Preserving valuation privacy in diffusion auctions is not a trivial task. On one hand, existing diffusion auctions, including IDM \citep{li2017mechanism}, CMD \citep{li2019diffusion}, and FDM \citep{zhang2020incentivize}, are deterministic, and thus fail to preserve privacy. On the other hand, existing differential private mechanisms, including exponential mechanism, fail to incentivise truthful report of neighbours, which is illustrated in Example~\ref{exa:exponential}. 

\begin{example}
\label{exa:exponential}
We apply exponential mechanism paired with score function $\sigma(\theta,o_i)=v'_i$ to the scenario in Figure~\ref{fig:inference}. That is, the score of the result that $i$ wins is $i$'s reported valuation $v_i'$. We assume that the buyers truthfully report their valuations. Then buyer $i$ wins with probability $\exp(\epsilon v_i )/\sum_{\kappa\in N}\exp(\epsilon v_{\kappa})$.
Now if buyer $b$ reports her neighbour $f$, $b$ wins with probability $\exp(8\epsilon)/\sum_{\kappa\in N}\exp(\epsilon v_\kappa)$, whereas she wins with probability $\exp(8\epsilon)/\sum_{\kappa\in N \setminus \{f,j\}}\exp(\epsilon v_\kappa)$ had she chose not to report $f$. In the latter case, the winning probability is even higher, and thus $b$ has incentive to hide her neighbours.
\end{example}

To incentivise buyers to diffuse auction information, we need to ensure each buyer's utility of reporting her neighbours should be no less than that of non-reporting. 
We now propose {\em recursive DPDM} $\Mrec$ to achieve this condition. The basic idea is ``market division'', i.e., treat the social network as a market, partition the market into multiple sub-markets and assign each sub-market a probability with which buyers in this sub-market win, as shown in Eqn.~\eqref{eqn:subtree}. In this case, each buyer would report as many neighbours as possible in order to maximise the probability of the sub-market she belongs to. Then the buyers in a sub-market share the probability of the sub-market in such a way that the winning probability of any buyer is independent from her children, as shown in Eqn.~\eqref{eqn:node}. Therefore, the buyers have no competition with their children and have no incentive to misreport them.  

We now describe $\Mrec$ in detail: Fix a score function $\sigma(\cdot)$ that is non-decreasing in reported valuation $v_i'$. Given a reported global profile $\theta'$, a privacy parameter $\epsilon$ and the score function $\sigma(\cdot)$ as input, $\Mrec$ works as follows: 

\noindent {\bf (1) Construction of diffusion critical tree.} Given a profile digraph $G_{\theta'}$, $\Mrec$ first constructs a {\em diffusion critical tree}, denoted by $T_{\theta'}$. When the context is clear, we write the tree as $T$. The idea of diffusion critical tree is originally 
introduced by \citep{zhao2018selling}. For any buyers $i,j \in V_{\theta'}$, we say that $i$ is {\em $\theta'$-critical} to $j$, denoted by $i \preceq_{\theta'} j$, if all paths from $s$ to $j$ in $G_{\theta'}$ go through $i$. A diffusion critical tree is a rooted tree, where the root is seller $s$ and the nodes $V_{\theta'}$ are the buyers who are connected to $s$, and for each $j\in V_{\theta'}$, her parent is the node $i \preceq_{\theta'} j$ who has the closest distance to $j$. When there are more than one parents, only one node is randomly selected as the parent. 
The {\em depth} of buyer $i$, denoted by $d_i$, is the distance from $s$ to $i$. 

\noindent {\bf (2) Assignment of winning probabilities. } This step determines the probabilities that buyers win the item. 
This is a recursive process. This process starts with the constructed 
$T$ rooted by $s$. 
Given a (sub-)tree rooted by $i \in V$, $\Mrec$ assigns a probability to each sub-tree rooted by $j\in r_i$, and a winning probability to each $j\in r_i$. This operation is repeated for $j$'s children, children of $j$'s children and so on until there is no more children. 


\noindent {\em (a) Assignment of probabilities to sub-trees.} Let $T[i]$ denote the sub-tree rooted by $i$. $T[i]$ consists of node $i$ and all of $i$'s descendants. Let $T(i)$ denote $T[i]$ with $i$ removed, i.e., $T(i)\coloneqq T[i] \setminus \{i\}$.
Given a sub-tree $T[i]$, $\Mrec$ divides the market in $T[i]$ to $|r_i|+1$ sub-markets, one for $i$ and each of the other for a sub-tree $T[j]$, where $j\in r_i$. Then $\Mrec$ assigns a probability $\Pr_i^{\theta'}(\theta_i')$ to $i$ with $\theta_i'$ and $\Pr^{\theta'}_{T[j]}$ to each $T[j]$, where $j\in r_i$. When the context is clear, we write $\Pr_i$ and $\Pr_{T[j]}$ for  $\Pr_i^{\theta'}(\theta_i')$ and $\Pr^{\theta'}_{T[j]}$, respectively. 
We define $\Pr_i$ later in Step (2).b. 
For notational convenience, given a set of nodes $S\subseteq T$, we let $\EXP(S)$ be the sum  
\begin{equation*}\label{eqn:EXP}
\EXP(S) = \sum_{\kappa\in S} \exp(\epsilon\sigma(\theta',o_\kappa)).
\end{equation*}

Now we define $\Pr_{T[j]}$ for each $j\in r_i$ as
\begin{equation}
\label{eqn:subtree}
\Pr_{T[j]}=\left(\Pr_{T[i]}-\Pr_i\right)\times\frac{\EXP(T[j])}{\EXP(T(i))}    
\end{equation}

\noindent {\em (b) Assignment of winning probabilities to buyers within a sub-market.} In a sub-tree $T[i]$, $\Mrec$ assigns the winning probability $\Pr_j$ to each $j\in r_i$ as
\begin{equation}
\label{eqn:node}
\Pr_j = \left(\Pr_{T[i]}-\Pr_i\right) 
\times\frac{\EXP(j)}{\EXP( T(i) \setminus T(j))}    
\end{equation}
\noindent At the very beginning, $\Mrec$ starts with the tree $T$ rooted by $s$. We label $s$ as node $0$ and set $\Pr_{T[0]}=1$ and $\Pr_0=0$. $\Mrec$ ends with the leaves. For a sub-tree $T[i]$ where each $j\in r_i$ are leaves, $\Mrec$ assigns the winning probability to each $j$ as $\Pr_j=\left(\Pr_{T[i]}-\Pr_i\right)\times \frac{\EXP(j)}{\EXP(T(i))}$.

\noindent {\bf (3) Allocation and payment.} Randomly select a buyer $w$ as a winner according to the constructed distribution in Step (2). Set $w$'s allocation $\pi_{w}=1$, and payment as 
\begin{equation}
\label{eqn:payment}
p_{w}=v_{w}'-\int_{0}^{v_{w}'}\Pr_w((x,r_w')) d x / \Pr_w(\theta_w')
\end{equation}

We present the details of $\Mrec$ in Algorithm~\ref{alg:recursive} and give a running example of Step (2) in Example~\ref{exa:recursive}. 

\begin{algorithm}[H]
	\caption{\small Recursive DPDM $\Mrec$}
	\label{alg:recursive}
	\begin{algorithmic}[1]
		\Require Reported global profile $\theta'$, privacy parameter $\epsilon$ and score function $\sigma$
		\Ensure Allocation result $\pi(\theta')$ and payment result $p(\theta')$
		\State Initialise $\pi(\theta')= \textbf{0}, p(\theta')= \textbf{0}$ 
		\State Construct a profile digraph $G_{\theta'}=(V_{\theta'},E_{\theta'})$
		\State Construct a critical diffusion tree $T_{\theta'}$
		\State Run GetPro($T_{\theta'}[0],1,0$)
	    \State Randomly select a buyer $w$ with the distribution
	    \State Set $\pi_{w}=1$ and $p_w$ by Equation~(\ref{eqn:payment})
	\end{algorithmic}
\end{algorithm}

\begin{algorithm}[H]
	\caption{\small GetPro}
	\label{alg:subtree}
	\begin{algorithmic}[1]
	\Require (Sub-)Tree $T[i]$, probabilities $\Pr_{T[i]}$ and $\Pr_i$
	\Ensure Probabilities $\Pr_{T[j]}$ and $\Pr_j$, $j\in r_i$ 
	\For{$j\in r_i$}
	\State Calculate $\Pr_{T[j]}$ of sub-tree $T[j]$ by Equation~(\ref{eqn:subtree}) 
	\State Calculate $\Pr_j$ of buyer $j$ by Equation~(\ref{eqn:node}) 
	\State Run GetPro($T[j],\Pr_{T[j]},\Pr_j$)
	\EndFor
	\end{algorithmic}
\end{algorithm}

\begin{example}
\label{exa:recursive}
We apply $\Mrec$ paired with score function $\sigma(\theta,o_i)=v'_i$ to the scenario in  Fig.~\ref{fig:inference}. 
Firstly, $\Pr[T]=1$ and $\Pr_s=0$. Next we calculate the probabilities of $s$'s children. The probability for $T[a]$ is $\Pr_{T[a]}=(\exp(10\epsilon)+\exp(9\epsilon)+\exp(12\epsilon))/{\EXP(T)}$.
Buyer $a$ wins with probability $\Pr_a(10)=\exp(10\epsilon)/(\EXP(T)-(\exp(9\epsilon)+\exp(12\epsilon)))$. Similarly, we can get the probabilities for $T[b], T[c]$ and $b,c$. 
Consider buyer $d$. $d$ wins with probability $\Pr_d(9)=(\Pr(T[a])-\Pr_a)\times \exp(9\epsilon)/(\exp(9\epsilon)+\exp(12\epsilon))$. Similarly, we can also get the probabilities for $e,f,g$. 
\end{example}

Next we show that recursive DPDM satisfies IC, IR and DP. The next classical result is important for IC.

\begin{theorem}[\citep{Archer2001TruthfulMF}]
\label{thm:IC}
Let $\Pr_i(v_i')$ be the probability that $i$ wins when she reports $v_i'$. A mechanism $M=(\pi, p)$ is incentive compatible in terms of valuations if and only if, for any $i\in N$, 
\begin{enumerate}
    \item $\Pr_i(v_i')$ is monotonically non-decreasing in $v_i'$;
    \item $\E[p_i]=v_i\Pr_i(v_i')-\int_{0}^{v_{i}'}\Pr_i(x)dx$ 
\end{enumerate}
\end{theorem}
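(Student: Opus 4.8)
The plan is to prove both directions of the equivalence by fixing buyer $i$ and the reports $\theta_{-i}'$ of everyone else, and studying $i$'s expected utility as a function of the single real variable she controls, her reported valuation. Write $P(w)\coloneqq \Pr_i(w)$ for her winning probability and $q(w)\coloneqq \E_M[p_i(w)]$ for her expected payment when she reports $w$, so that her expected utility when her true valuation is $v$ but she reports $w$ equals $U(v,w) = v\,P(w) - q(w)$. By definition, $M$ is IC in valuations precisely when $U(v,v)\ge U(v,w)$ for every true value $v$ and every report $w$. Throughout, the formula in condition (2) should be read as pinning down the expected payment as a function of the reported valuation (on the truthful path $v_i=v_i'$).

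For sufficiency ($\Leftarrow$) I would substitute condition (2), $q(w)=w\,P(w)-\int_0^w P(x)\,dx$, into $U(v,w)$ to get $U(v,w)=(v-w)\,P(w)+\int_0^w P(x)\,dx$. The truth-telling inequality $U(v,v)\ge U(v,w)$ then collapses to the single inequality $\int_w^v P(x)\,dx \ge (v-w)\,P(w)$. Monotonicity of $P$ (condition (1)) closes both cases at once: when $v>w$ every value $P(x)$ in the integrand is at least $P(w)$, and when $v<w$ every $P(x)$ is at most $P(w)$, so the inequality holds with the correct sign either way. This is the easy half.

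Necessity ($\Rightarrow$) is where the real work lies. Assuming IC, I would first recover monotonicity by writing the two deviation inequalities for a pair $v_1<v_2$ (the buyer with value $v_1$ not preferring to report $v_2$, and the buyer with value $v_2$ not preferring to report $v_1$) and adding them, which yields $(v_2-v_1)\,(P(v_2)-P(v_1))\ge 0$ and hence $P$ non-decreasing. For the payment identity I would use the classical sandwich argument: the same two inequalities, written for arbitrary $v<v'$, rearrange to $v\,(P(v')-P(v)) \le q(v')-q(v) \le v'\,(P(v')-P(v))$. Since $P$ is now known to be monotone, hence of bounded variation, this squeeze identifies the increments of $q$ with a Riemann--Stieltjes integral, giving $q(v)-q(0)=\int_0^v x\,dP(x)$; integration by parts then rewrites the right-hand side as $v\,P(v)-\int_0^v P(x)\,dx$.

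The subtlety I would flag as the main obstacle is the boundary term: the identity matches the stated formula only under the normalisation $q(0)=0$, i.e. a buyer declaring valuation $0$ makes zero expected payment, which is exactly what the individual-rationality convention supplies; without it the formula carries an additive constant $q(0)$. I would state this normalisation explicitly. A secondary care point is ordering: the sandwich step needs $P$ monotone before it can be read as a Stieltjes integral, so the monotonicity argument must precede the payment derivation. The envelope-theorem route, observing that $g(v)\coloneqq U(v,v)=\max_w U(v,w)$ is convex with $g'(v)=P(v)$ almost everywhere and $g(v)=g(0)+\int_0^v P(x)\,dx$, is an equivalent but analytically heavier alternative that I would mention only as a remark.
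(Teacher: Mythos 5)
Your proposal is correct, but there is no proof in the paper to compare it against: the paper states this theorem as an imported classical result (the Myerson/Archer--Tardos characterisation of truthful single-parameter mechanisms, cited to \citep{Archer2001TruthfulMF}) and never proves it, only invoking it later to establish IC of $\Mrec$ and $\Mlay$. What your write-up buys is a self-contained justification that also exposes exactly which hypotheses the paper is silently relying on. The argument itself is the standard one and is sound: sufficiency by substituting the payment identity to get $U(v,w)=(v-w)P(w)+\int_0^w P(x)\,dx$ and using monotonicity of $P$ to sign $\int_w^v P(x)\,dx-(v-w)P(w)$ in both cases; necessity by adding the two deviation inequalities to obtain $(v_2-v_1)(P(v_2)-P(v_1))\ge 0$, then the sandwich $v\,(P(v')-P(v))\le q(v')-q(v)\le v'\,(P(v')-P(v))$, the Riemann--Stieltjes limit (legitimate since $P$ is monotone, hence of bounded variation), and integration by parts. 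Your flagged subtlety about the boundary term is the one genuine gap in the theorem as literally stated, but your resolution is slightly imprecise: individual rationality alone only yields $q(0)\le 0$ (set $v=0$ in $vP(v)-q(v)\ge 0$); to conclude $q(0)=0$ you must also use the paper's convention that payments are nonnegative (the randomised payment function maps into $(\R^+)^n$), which supplies $q(0)\ge 0$. With that one-line repair your proof is complete and is precisely the argument the cited reference contains.
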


\begin{lemma} 
\label{lem:recursiveIC}
	Recursive DPDM $\Mrec$ is incentive compatible in terms of both valuations and neighbours. 
\end{lemma}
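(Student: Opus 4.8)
The plan is to reduce the claim to two monotonicity properties of the winning probability and then invoke the characterisation in Theorem~\ref{thm:IC}. Write $\Pr_i(v_i',r_i')$ for the probability that $i$ wins when she reports $(v_i',r_i')$, the reports of the other buyers being held fixed. By construction the payment of Eqn.~\eqref{eqn:payment} is charged only to the winner, so taking expectation over the mechanism's randomness gives $\E[p_i]=v_i'\Pr_i(v_i',r_i')-\int_{0}^{v_i'}\Pr_i(x,r_i')\,dx$, which is exactly the payment of clause~2 of Theorem~\ref{thm:IC}. Substituting into $u_i=v_i\pi_i-p_i$ yields $\E[u_i]=(v_i-v_i')\Pr_i(v_i',r_i')+\int_{0}^{v_i'}\Pr_i(x,r_i')\,dx$, and in particular $\E[u_i]=\int_{0}^{v_i}\Pr_i(x,r_i')\,dx$ under a truthful valuation. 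Hence it suffices to prove that $\Pr_i$ is non-decreasing (i) in $v_i'$ and (ii) in the reported neighbourhood $r_i'$: the first supplies clause~1 of Theorem~\ref{thm:IC} and truthful valuation, while the second makes $r_i'=r_i$ maximise $\int_{0}^{v_i}\Pr_i(x,r_i')\,dx$.

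For (i), note that increasing $v_i'$ only increases $\EXP(i)=\exp(\epsilon\sigma(\theta',o_i))$, since $\sigma$ is non-decreasing in $v_i'$, and changes no other score. So it is enough to show $\Pr_i$ is non-decreasing in $\EXP(i)$. Writing $Q_b\coloneqq\Pr_{T[b]}-\Pr_b$ for the mass distributed to $b$'s child-subtrees, Eqns.~\eqref{eqn:subtree}--\eqref{eqn:node} give, for any node $b$ with parent $\tilde b$, the recursion $Q_b=Q_{\tilde b}\,\frac{\EXP(T(b))\,\EXP(T(\tilde b)\setminus T[b])}{\EXP(T(\tilde b))\,\EXP(T(\tilde b)\setminus T(b))}$, with base value $Q_0=1$. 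Along the path from the root to $i$, the quantity $\EXP(i)$ enters only through $\EXP(T(b))$ and $\EXP(T(\tilde b))$ for ancestors $b$ of $i$; the $\EXP(i)$-dependent part of each factor is the ratio $\EXP(T(b))/\EXP(T(\tilde b))$, which has the form $y/(y+g)$ with $y=\EXP(T(b))$ the score of the smaller set $T(b)\ni i$ and $g$ independent of $\EXP(i)$, hence non-decreasing in $\EXP(i)$, the remaining factors being constant. An induction down the path then shows every $Q_b$ is non-decreasing in $\EXP(i)$, and finally $\Pr_i=Q_a\,\EXP(i)/\EXP(T(a)\setminus T(i))$ (with $a$ the parent of $i$) is a product of a non-negative factor non-decreasing in $\EXP(i)$ and the strictly increasing factor $\EXP(i)/(\EXP(i)+\text{const})$; hence $\Pr_i$ is non-decreasing in $\EXP(i)$.

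For (ii), the same factorisation, now tracking $\EXP(T(i))$, shows that the local factor $\EXP(i)/\EXP(T(a)\setminus T(i))$ is independent of $i$'s subtree (its denominator excludes $T(i)$ by definition of Eqn.~\eqref{eqn:node}), while every ancestor factor $\EXP(T(b))/\EXP(T(\tilde b))$ is again of the form $y/(y+g)$ with $\EXP(T(i))$ entering $y=\EXP(T(b))\supseteq\EXP(T(i))$ additively; therefore $\Pr_i$ is non-decreasing in $\EXP(T(i))$. It then remains to show that reporting an additional true neighbour $k$ weakly increases $\EXP(T(i))$ while leaving unchanged the scores of the nodes outside $T[i]$ that feed the ``constant'' parts of the factors above $i$.

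This final reduction is where I expect the genuine difficulty, and it is diffusion-specific: adding the edge from $i$ to $k$ alters the profile digraph $G_{\theta'}$ and hence the critical tree $T_{\theta'}$, so one cannot simply paste $T[k]$ under $i$. The clean case is when $k$ and its downstream buyers were reachable only through $i$; then $i$ becomes $\theta'$-critical to all of them, $T[k]$ is absorbed wholesale into $T[i]$, $\EXP(T(i))$ strictly increases, and nothing outside $T[i]$ changes. The delicate case is when $k$ is already reachable along a path avoiding $i$; then $i$ does not become critical to $k$, so $k\notin T[i]$, and I would argue from the definition of $\theta'$-criticality that adding an edge out of $i$ can only enlarge the set $\{j:i\preceq_{\theta'}j\}$ and never shrink it, so $T[i]$ can only gain nodes, while the rerouting of already-reachable nodes only moves them toward common ancestors of $i$ and the alternative path, which I must verify does not decrease $\Pr_i$. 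Establishing this monotonicity of the critical-tree decomposition under edge addition is the crux; once it is in place, $\int_{0}^{v_i}\Pr_i(x,r_i')\,dx$ is maximised at $r_i'=r_i$, and together with (i) the characterisation of Theorem~\ref{thm:IC} delivers IC in both valuations and neighbours.
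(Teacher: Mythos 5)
Your valuation-IC half is correct, and in fact more careful than the paper's own argument: the factorisation $Q_b=Q_{\tilde b}\cdot\frac{\EXP(T(b))\,\EXP(T(\tilde b)\setminus T[b])}{\EXP(T(\tilde b))\,\EXP(T(\tilde b)\setminus T(b))}$ really does follow from Eqns.~\eqref{eqn:subtree}--\eqref{eqn:node} (with $x=\EXP(b)$, $y=\EXP(T(b))$, $D=\EXP(T(\tilde b))$ one checks $\frac{x+y}{D}-\frac{x}{D-y}=\frac{y(D-x-y)}{D(D-y)}$), your $y/(y+g)$ analysis correctly gives that $\Pr_i$ is non-decreasing in $\EXP(i)$, and the payment identity supplies clause~2 of Theorem~\ref{thm:IC}. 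This matches the structure of the paper's proof of Lemma~\ref{lem:recursiveIC}, which establishes the same two clauses more tersely.

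The gap is exactly where you flag it, and it is genuine: your neighbour-IC argument reduces to the claim that adding an out-edge $(i,k)$ enlarges the mass inside $T(i)$ \emph{while leaving unchanged the node sets feeding every other factor}, and you leave the ``delicate case'' ($k$ already reachable around $i$) unverified, so the proof does not close. The missing piece is a structural lemma about the critical tree: for every tree-ancestor $a^\ell$ of $i$ and every node $j$ reachable both before and after the addition, $a^\ell\preceq_{\theta'}j$ holds before iff it holds after; consequently each $T[a^\ell]$ changes exactly by absorbing the newly reachable nodes (all of which land inside $T[i]$), while $T(a^{\ell+1})\setminus T(a^\ell)$ and $T(a^1)\setminus T(i)$ are unchanged as node sets, so your ``constants'' $g$ and the local factor really are constant. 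This lemma is true and short to prove: any $s$--$j$ path that uses the new directed edge $(i,k)$ has a prefix that is an $s$--$i$ path avoiding $(i,k)$, and every such prefix passes through each $a^\ell$ because $a^\ell\preceq_{\theta'}i$; hence no ancestor of $i$ loses criticality over a surviving node, and since adding an edge only adds paths, no node can gain criticality over an already-reachable node. This also dissolves your ``rerouting toward common ancestors'' worry: rerouted nodes may change parents, but they never cross the boundaries of the sets on which $\Pr_i$ depends, because by Eqn.~\eqref{eqn:IC} the winning probability is a function of set masses only, not of the internal tree structure. With this lemma inserted, your mass-monotonicity argument completes the proof. For comparison, the paper runs the identical computation in the reverse (hiding) direction and simply asserts that $\frac{\EXP(i)}{\EXP(T(a^1)\setminus T(i))}$ and $\frac{\EXP(a^\ell)}{\EXP(T(a^{\ell+1})\setminus T(a^\ell))}$ stay fixed while $\frac{\EXP(T[a^\ell])}{\EXP(T(a^{\ell+1}))}$ decreases; those assertions are precisely the invariance facts above, so what you honestly flagged as ``the crux'' is the very step the paper leaves implicit.
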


\begin{proof}
We first show $\Mrec$ is IC in terms of valuations. By Equation~(\ref{eqn:subtree}), the probability for any sub-tree $T[i]$ is proportional to the score, which is non-decreasing in $v_i'$. Hence, $\Pr_{T[i]}$ in non-decreasing in $v_i'$. Similarly, by Equation~(\ref{eqn:node}), given a sub-tree $T[i]$, the winning probability $\Pr_i$ is non-decreasing in $v_i'$, which meets  the condition (1) in Thm.~\ref{thm:IC}. 
Also, by Equation~(\ref{eqn:payment}), the expected payment $$\E[p_i]=p_i \times \Pr_i= v_{i}' \Pr_i(\theta_i')-\int_{0}^{v_{i}'}\Pr_i((x,r_i'))dx,$$ which meets the condition (2) in Theorem~\ref{thm:IC} when $r_i'$ is fixed. 
Therefore, $\Mrec$ is IC in terms of valuations. 

Next we show $\Mrec$ is IC in terms of neighbours. By the definitions of expected utility and payment function~(\ref{eqn:payment}), we know that $i$'s expected utility is only determined by the winning probability $\Pr_i$. Let  $a^\ell$ be an ancestor of $i$ with distance $\ell$. When $i$ reports truthfully as $\theta_{i}$ and the reported global profile is $\theta'_{-i}$, 
then $i$'s winning probability is
\begin{equation}
\label{eqn:IC}
\begin{aligned}
\Pr_i=&\frac{\EXP(i)}{\EXP(T(a^1)\setminus T(i))} \times (\Pr_{T[a^1]}-\Pr_{a^1}) \\
=& \frac{\EXP(i)}{\EXP(T(a^1)\setminus T(i))} \times \left( \Pr_{T[a^2]}-\Pr_{a^2} \right) \\
&\text{\qquad } \times \left(\frac{\EXP(T[a^1])}{\EXP(T(a^2))}-\frac{\EXP(a^1)}{\EXP( T(a^2) \setminus T(a^1))} \right)\\
=& \frac{\EXP(i)}{\EXP(T(a^1)\setminus T(i))} \times \left( \Pr_T-\Pr_{s} \right) \\
&\times \prod_{\ell=1}^{d_{i}-1}\left(\frac{\EXP(T[a^\ell])}{\EXP(T(a^{\ell+1}))}-\frac{\EXP(a^\ell)}{\EXP( T(a^{\ell+1}) \setminus T(a^\ell))} \right)
\end{aligned}
\end{equation}
If $i$ hides some of her neighbours and reports any $\theta_i'$ where $r_i' \subseteq r_i$, instead, and the others report $\theta_{-i}'$. 
Then in Equation~(\ref{eqn:IC}), $\Pr_T$, $\Pr_{s}$ and $\frac{\EXP(i)}{\EXP(T(a^1)\setminus T(i))}$ does not change. Also, for each $\ell$, $\frac{\EXP(a^\ell)}{\EXP( T(a^\ell) \setminus T(a^{\ell+1}))}$ remains intact, but $\frac{\EXP(T[a^\ell])}{\EXP(T(a^{\ell+1}))}$ decreases. So we can know that $\Pr_i$ decreases when $i$ misreports her neighbourhood. Therefore, we have $\E_{\Mrec}[u_i(((v_i,r_i),\theta_{-i}'))] \geq \E_{\Mrec}[u_i(((v_i,r_i'),\theta_{-i}''))].$
\end{proof}


\begin{lemma}
\label{lem:recursiveIR}
Recursive DPDM $\Mrec$ is individually rational in terms of both valuations and neighbours. 
\end{lemma}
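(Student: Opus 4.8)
```latex
The plan is to establish individual rationality (IR) by showing that every buyer's expected utility is non-negative under truthful reporting, for both the valuation and neighbour dimensions. Since $\Mrec$ is a randomised mechanism, IR is defined in expectation, so I would work directly with $\E_{\Mrec}[u_i(\theta)] = v_i \Pr_i(\theta_i') - \E[p_i]$. The cleanest route is to reuse the characterisation already in hand: by Lemma~\ref{lem:recursiveIC}, the mechanism satisfies condition (2) of Thm.~\ref{thm:IC}, so the expected payment is exactly $\E[p_i] = v_i' \Pr_i(\theta_i') - \int_0^{v_i'} \Pr_i((x,r_i'))\,dx$. The whole argument then reduces to controlling this single expression.

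First I would substitute the payment formula into the expected utility under \emph{truthful valuation reporting} (so $v_i' = v_i$), obtaining
\begin{equation*}
\E_{\Mrec}[u_i(\theta)] = v_i \Pr_i(\theta_i) - \left( v_i \Pr_i(\theta_i) - \int_0^{v_i} \Pr_i((x,r_i))\,dx \right) = \int_0^{v_i} \Pr_i((x,r_i))\,dx.
\end{equation*}
Because $\Pr_i$ is a probability it is non-negative, so the integral of a non-negative function over $[0,v_i]$ with $v_i \geq 0$ is non-negative, giving $\E_{\Mrec}[u_i(\theta)] \geq 0$. This handles IR in terms of valuations essentially for free. I would note explicitly that the argument only needs $\Pr_i(x,r_i) \geq 0$ for all $x \in [0,v_i]$, which is immediate from the construction in Eqns.~\eqref{eqn:subtree} and~\eqref{eqn:node}, since each factor there is a ratio of sums of exponentials (hence positive) times an inductively non-negative quantity starting from $\Pr_{T[0]} = 1$.

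For IR in terms of neighbours, the key observation from the proof of Lemma~\ref{lem:recursiveIC} is that a buyer's expected utility depends on her report only through her winning probability $\Pr_i$, and that $\Pr_i$ is maximised by reporting truthfully (it can only decrease under neighbour concealment). Thus the truthful utility dominates any deviating utility, and since the truthful utility is already shown to be non-negative, every achievable utility bounds below at the relevant baseline. I would therefore conclude that for any reported $\theta_{-i}'$ and any admissible misreport, the expected utility under the truthful profile $(v_i, r_i)$ remains $\int_0^{v_i} \Pr_i((x,r_i))\,dx \geq 0$, establishing IR in both dimensions.

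I do not anticipate a serious obstacle here, as the heavy lifting was done in Lemma~\ref{lem:recursiveIC}: once the Archer--Tardos payment identity holds, IR collapses to the sign of an integral of a non-negative integrand. The one point requiring care is verifying non-negativity of $\Pr_i(x,r_i)$ uniformly over the integration range $x \in [0,v_i]$ rather than only at the reported value, since the payment formula integrates the winning probability as the hypothetical valuation sweeps from $0$ to $v_i'$; this follows from the recursive definition but should be stated rather than assumed.
```
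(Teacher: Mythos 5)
Your proof is correct and takes essentially the same route as the paper: substituting the payment rule of Eqn.~\eqref{eqn:payment} into the expected utility under truthful reporting gives $\E_{\Mrec}[u_i(\theta)] = (v_i - p_i(\theta))\Pr_i(\theta_i) = \int_0^{v_i}\Pr_i((x,r_i))\,dx \geq 0$, which is exactly the paper's one-line argument. Your second paragraph invoking the monotonicity of $\Pr_i$ under neighbour concealment is harmless but unnecessary --- IR as defined in the paper only requires non-negativity of the expected utility at the \emph{truthful} profile $\theta_i=(v_i,r_i)$ for every $\theta_{-i}'$, so the integral computation alone already establishes IR in both dimensions.
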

\begin{proof}
Given a global profile $\theta$, for each buyer $i$ with $(v_i,r_i)$, 
$\E_{\Mrec}[u_i(\theta)]=(v_i-p_i(\theta)) \Pr_i(\theta_i) = \int_{0}^{v_{i}}\Pr_i((x,r_i))dx \geq 0.$
Therefore, the lemma holds.  
\end{proof}

In following lemma, we use the following terminologies:
\vspace{-0.3cm}
\begin{itemize}[leftmargin=*]
    \item $d_{\max}$ denotes the maximum depth of the diffusion critical tree, 
    \item $\Delta\sigma$ denotes the largest possible difference in the score function $\sigma$ when applied to two global profiles that differ only on a single user’s valuation, for all possible outcome $o_i\in O$.
\end{itemize}
\vspace{-0.3cm}
\begin{lemma}
\label{lem:recursiveDP}
Given a reported global profile $\theta'$, recursive DPDM $\Mrec$ is $\epsilon d_{\max} \Delta \sigma$-differential privacy, where
$\epsilon$ is the privacy parameter of $\Mrec$. 
\end{lemma}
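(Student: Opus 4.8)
The plan is to fix the (randomly constructed) diffusion critical tree and bound, for each possible winner, the ratio of its winning probabilities under the two profiles. Since $\theta$ and $\theta'$ differ only in one buyer $k$'s \emph{valuation}, the reported neighbourhoods — and hence the profile digraph and the tree $T$, together with any random parent choices — have identical distributions under $\theta$ and $\theta'$; so I may condition on a fixed $T$ and integrate over $T$ afterwards without loss. The outcomes are the events $o_i$ that buyer $i$ wins, with $\Pr[\Mrec(\theta)=o_i\mid T]=\Pr_i(\theta)$ given in closed form by Eqn.~(\ref{eqn:IC}). An outcome has zero probability exactly when some sub-market along $i$'s branch is empty, a structural condition shared by $\theta$ and $\theta'$; so it suffices to show $\Pr_i(\theta)\le\exp(\epsilon d_{\max}\Delta\sigma)\,\Pr_i(\theta')$ for every $i$ with $\Pr_i>0$.

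Changing $v_k$ alters only the single quantity $g_k:=\exp(\epsilon\sigma(\cdot,o_k))$, and by definition of $\Delta\sigma$ the factor $g_k(\theta)/g_k(\theta')$ lies in $[\exp(-\epsilon\Delta\sigma),\exp(\epsilon\Delta\sigma)]$. Because every $\EXP(S)$ is a sum of positive terms, at most one of which depends on $g_k$, the ratio $\EXP(S)(\theta)/\EXP(S)(\theta')$ stays in the same range when $k\in S$ and equals $1$ when $k\notin S$. The crux is that each factor of Eqn.~(\ref{eqn:IC}) is a \emph{difference of two ratios}, which has no clean multiplicative sensitivity. I would first rewrite it as a single ratio of $\EXP$-sums: a short computation gives $\mathrm{term}_\ell=\frac{\EXP(T(a^\ell))\,\EXP(T(a^{\ell+1})\setminus T[a^\ell])}{\EXP(T(a^{\ell+1}))\,\EXP(T(a^{\ell+1})\setminus T(a^\ell))}$, after which the factor $\EXP(T(a^\ell))/\EXP(T(a^{\ell+1}))$ telescopes through the product. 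The outcome is that $\Pr_i$ becomes a product of at most $d_i+1$ factors, each of the form $\EXP(S_{\mathrm{num}})/\EXP(S_{\mathrm{den}})$ with $S_{\mathrm{num}}\subseteq S_{\mathrm{den}}$.

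For any such nested factor, its logarithmic derivative in $g_k$ equals $\frac{g_k[k\in S_{\mathrm{num}}]}{\EXP(S_{\mathrm{num}})}-\frac{g_k[k\in S_{\mathrm{den}}]}{\EXP(S_{\mathrm{den}})}$, which lies in $[-1,1]$ because each summand is in $[0,1]$ and $k\in S_{\mathrm{num}}\Rightarrow k\in S_{\mathrm{den}}$ with $\EXP(S_{\mathrm{num}})\le\EXP(S_{\mathrm{den}})$. A short case analysis on the position of $k$ relative to $i$'s root-path — according to the depth of the lowest common ancestor of $i$ and $k$ (is $k$ a descendant of $i$, an ancestor of $i$, or in a side branch?) — shows that $k$ occurs in at most $d_i$ of these factors. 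Summing the per-factor derivatives and integrating $g_k$ from $g_k(\theta')$ to $g_k(\theta)$ (all $\EXP$-sums stay positive) yields $\bigl|\log(\Pr_i(\theta)/\Pr_i(\theta'))\bigr|\le d_i\cdot\bigl|\log(g_k(\theta)/g_k(\theta'))\bigr|\le \epsilon\,d_i\,\Delta\sigma\le\epsilon\,d_{\max}\Delta\sigma$, as required.

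The main obstacle is the difference-of-ratios form of $\mathrm{term}_\ell$: a factor-by-factor perturbation bound on Eqn.~(\ref{eqn:IC}) as written fails, both because a difference lacks a clean multiplicative sensitivity and because the same $g_k$ reappears across several levels. The simplification to a single ratio, together with the telescoping that cancels the shared intermediate sums $\EXP(T(a^\ell))$, is precisely what lets the sensitivities accumulate additively rather than compound; getting the bookkeeping of which sets contain $k$ right, and using the nesting $S_{\mathrm{num}}\subseteq S_{\mathrm{den}}$ for the $[-1,1]$ bound, is the delicate part. In fact the same case analysis shows $k$ meets at most two of the factors, so this argument even yields the stronger depth-independent bound $2\epsilon\Delta\sigma$.
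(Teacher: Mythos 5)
Your proof is correct, and it takes a genuinely different route from the paper's. The paper keeps each level's term in its difference-of-ratios form and bounds every factor $(A^{\theta}_{\ell}-B^{\theta}_{\ell})/(A^{\theta'}_{\ell}-B^{\theta'}_{\ell})$ by $\exp(\epsilon\Delta\sigma)$ directly, via a co-monotonicity observation and a two-case scaling argument (writing $A^{\theta'}_{\ell}=\alpha_1A^{\theta}_{\ell}$, $B^{\theta'}_{\ell}=\alpha_2B^{\theta}_{\ell}$), then multiplies the $d_w$ per-factor bounds. You instead collapse each difference into a single ratio via the identity
\begin{equation*}
A_{\ell}-B_{\ell}=\frac{\EXP(T(a^{\ell}))\,\EXP\bigl(T(a^{\ell+1})\setminus T[a^{\ell}]\bigr)}{\EXP(T(a^{\ell+1}))\,\EXP\bigl(T(a^{\ell+1})\setminus T(a^{\ell})\bigr)},
\end{equation*}
which is algebraically correct, telescope the $\EXP(T(a^{\ell}))/\EXP(T(a^{\ell+1}))$ factors, bound each remaining nested ratio by $\exp(\epsilon\Delta\sigma)$ through the $[-1,1]$ log-derivative argument, and finish by counting how many factors contain the perturbed buyer $k$. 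This buys two things. First, a strictly stronger conclusion: your observation that $k$ meets at most two factors gives the depth-independent guarantee $2\epsilon\Delta\sigma$, which implies the stated $\epsilon d_{\max}\Delta\sigma$ (and equals $\epsilon\Delta\sigma$ when $d_{\max}=1$). Second, your route is actually sounder than the paper's: in the paper's case $\alpha_2\ge\alpha_1$, the claimed inequality $\alpha_1A^{\theta}_{\ell}-\alpha_2B^{\theta}_{\ell}\ge\alpha_2A^{\theta}_{\ell}-\alpha_2B^{\theta}_{\ell}$ is equivalent to $\alpha_1\ge\alpha_2$ and hence runs backwards. Indeed, take the path $s\to a\to w$ with one extra child $c$ of $s$; then $A_1-B_1=\EXP(w)\EXP(c)/\bigl(\EXP(T(s))\,\EXP(T(s)\setminus T(a))\bigr)$, and when $a$'s weight dominates, perturbing $v_a$ changes this single factor by nearly $\exp(2\epsilon\Delta\sigma)$, contradicting the paper's per-factor bound of $\exp(\epsilon\Delta\sigma)$; your two-factor accounting yields exactly the valid $\exp(2\epsilon\Delta\sigma)$ and thereby rescues the lemma.

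Two small caveats on your write-up. You assume that changing $v_k$ moves only the weight $g_k=\exp(\epsilon\sigma(\cdot,o_k))$, i.e., that $\sigma(\theta,o_j)$ depends on $\theta$ only through $v_j$; this holds for the linear score the paper actually uses, but the paper's definition of $\Delta\sigma$ allows every outcome's score to shift when one valuation changes, in which case each nested ratio's sensitivity doubles and your constants would grow accordingly (the paper's proof has the mirror image of this issue, since it needs all scores to move monotonically with $v_k$). Also, your concern about zero-probability outcomes is vacuous: every buyer in the tree receives strictly positive winning probability because all $\EXP$-sums are positive, so no conditioning on support is needed.
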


\begin{proof}
Given two reported global profiles $\theta$ and $\theta'$ that differ in an arbitrary buyer $i$'s reported valuation such that $i$ reports $v_i$ in $\theta$ and $v_i'$ in $\theta'$, we consider the probabilities that $\Mrec(\theta)$ and $\Mrec(\theta')$ return a winner $w$. 
In a critical diffusion tree $T_{\theta}$, let $d_w$ denote the depth of $w$, $a_w^\ell$ be an ancestor of $w$ with distance $\ell$. Also, let $\EXP^{\theta}(T(a_w^1)-T(w))$ and $\EXP^{\theta'}(T(a_w^1)-T(w))$ denote the value derived from $\theta$ and $\theta'$, respectively. Then by Equation~(\ref{eqn:node}), we have 
\begin{equation*}
\label{eqn:winner}
\begin{aligned}
	\frac{\Pr[\Mrec(\theta)=o_w]}{\Pr[\Mrec(\theta')=o_w]} & =\frac{\frac{\EXP(w)}{\EXP^{\theta}(T(a_w^1)-T(w))}}{\frac{\EXP^{\theta'}(w)}{\EXP^{\theta'}(T(a_w^1)-T(w))}}  \times \frac{\Pr^{\theta}_{T[a_w^1]}-\Pr^{\theta}_{a_w^1}}{\Pr^{\theta'}_{T[a_w^1]}-\Pr^{\theta'}_{a_w^1}}
\end{aligned}
\end{equation*}

We repeatedly replace $\Pr^{\theta}_{T[a_w^{\ell}]}$, $\Pr^{\theta}_{a_w^{\ell}}$,  $\Pr^{\theta'}_{T[a_w^{\ell}]}$, $\Pr^{\theta'}_{a_w^{\ell}}$ by expressions of $a_w^{\ell+1}$ until we get an expression of $s$. For each distance $0 \leq \ell < d_w$, we denote  $\frac{\EXP(T[a_w^{\ell}])
}{\EXP(T(a_w^{\ell+1}))}$ as $A^{\theta}_{\ell}$,
$\frac{\EXP(a_w^{\ell})}{\EXP(T(a_w^{\ell+1})\setminus T(a_w^{\ell}))}$ as $B^{\theta}_{\ell}$. For $\theta'$, we have similar notations as $A^{\theta'}_{\ell}$ and $B^{\theta'}_{\ell}$. Then the above ratio can be written as 
\begin{equation*}
\begin{aligned}
\frac{\Pr[\Mrec(\theta)=o_w]}{\Pr[\Mrec(\theta')=o_w]}= \frac{B_0^{\theta}}{B_0^{\theta'}}\times \prod_{\ell=1}^{d_w-1 }{\frac{A^{\theta}_{\ell}-B^{\theta}_{\ell}}{A^{\theta'}_{\ell}-B^{\theta'}_{\ell}}}
\end{aligned}
\end{equation*}
Next we proof the lemma through that for each $0\leq \ell < d_w$, $\frac{A^{\theta}_{\ell}-B^{\theta}_{\ell}}{A^{\theta'}_{\ell}-B^{\theta'}_{\ell}}$ is bounded by $\exp(\epsilon \Delta \sigma)$. Here we skip the proof for this due to space limitation. See details in {\bf App.~\ref{app:lem:recursiveDP}}.
Then we have \begin{equation*}
\begin{aligned}
\frac{\Pr[\Mrec(\theta)=o_w]}{\Pr[\Mrec(\theta')=o_w]} &\leq 
\exp(\epsilon \Delta \sigma) \times \prod_{\ell=1}^{d_w-1} \exp(\epsilon \Delta \sigma) \\
&\leq  \exp(\epsilon d_w \Delta \sigma) \leq \exp(\epsilon d_{\max} \Delta \sigma)
\end{aligned}
\end{equation*}
\end{proof}

Next theorem easily follows from Lemmas~\ref{lem:recursiveIC}, \ref{lem:recursiveIR} \& \ref{lem:recursiveDP}. 

\begin{theorem}
\label{thm:recursive}
Recursive DPDM $\Mrec$ is IC, IR and $\epsilon d_{\max}\Delta \sigma$-DP. 
\end{theorem}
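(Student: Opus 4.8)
The plan is to read off Theorem~\ref{thm:recursive} as the conjunction of the three preceding lemmas, since each of them is established for the very same mechanism $\Mrec$ (the same winning distribution built from Equations~(\ref{eqn:subtree})--(\ref{eqn:node}) and the same payment rule~(\ref{eqn:payment})). Concretely, I would first invoke Lemma~\ref{lem:recursiveIC} to conclude that $\Mrec$ is IC in terms of both valuations and neighbours, which directly yields the IC claim. Second, I would invoke Lemma~\ref{lem:recursiveIR} to conclude IR (again in both coordinates), giving the IR claim. Third, I would invoke Lemma~\ref{lem:recursiveDP} to obtain the privacy bound, which gives that $\Mrec$ is $\epsilon d_{\max}\Delta\sigma$-DP. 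Assembling these three conclusions completes the proof.

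The genuine content of the statement therefore lies in the lemmas, not in their combination, so the one thing I would actually take care to verify is that the three claims are mutually consistent and refer to a single object. In particular, IC rests on the monotonicity of $\Pr_i(v_i')$ together with the payment identity of Theorem~\ref{thm:IC}, and both of these are preserved by the randomisation introduced for privacy: the noise enters only through the score-weighted quantities $\EXP(\cdot)$, never altering the monotone structure that IC and IR exploit. Hence there is no tension between the privacy guarantee and the incentive and participation guarantees, and the conjunction holds with no extra argument.

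If I had to name a soft spot, it would not be in the assembly step but in ensuring that the deferred inequality behind Lemma~\ref{lem:recursiveDP} --- the per-level bound $\frac{A_\ell^{\theta}-B_\ell^{\theta}}{A_\ell^{\theta'}-B_\ell^{\theta'}}\le \exp(\epsilon\Delta\sigma)$ relegated to the appendix --- really does hold, since the overall factor $\exp(\epsilon d_{\max}\Delta\sigma)$ is produced by multiplying $d_w \le d_{\max}$ such ratios along the ancestor chain. Modulo that appendix computation, Theorem~\ref{thm:recursive} is immediate from Lemmas~\ref{lem:recursiveIC}, \ref{lem:recursiveIR} and~\ref{lem:recursiveDP}.
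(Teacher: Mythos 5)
Your proposal matches the paper exactly: the paper states that Theorem~\ref{thm:recursive} ``easily follows from Lemmas~\ref{lem:recursiveIC}, \ref{lem:recursiveIR} \& \ref{lem:recursiveDP}'', which is precisely your assembly of the three lemmas for the single mechanism $\Mrec$. Your added caution about the deferred per-level bound in Lemma~\ref{lem:recursiveDP} is well placed, but that bound is indeed proved in the paper's Appendix~\ref{app:lem:recursiveDP}, so nothing further is needed.
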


\section{Layered DPDM}
\label{sec:layered}

Following the same idea of market division, we propose layered DPDM $\Mlay$ in this section. Different from $\Mrec$, $\Mlay$ divides the market by the buyers' distances to the seller. 
Specifically, given a constructed critical diffusion tree, $\Mlay$ allocates a certain probability to each layer of the tree, which will be shared by the buyers on this layer. For any buyer, once she is invited by her parent(s), her layer is fixed. Also, the buyer(s) whom she invites will be on the next layer, and thus has no competition with her. 

$\Mlay$ executes the same operations as in $\Mrec$, where the only difference is in Step (2) ``Assignment of winning probabilities''. 
Below we describe Step (2) of $\Mlay$ in detail:


\noindent {\bf (2) Assignment of winning probabilities.} In this step, given a critical diffusion tree $T_{\theta'}$, $\Mlay$ assigns a probability to each layer of the tree and then assigns a winning probability to buyers on each layer. 

\noindent {\em (a) Assignment of probability to layers.}  Now we give the definition of layer. Given a tree, 
the buyers with the same distance $d_i$ form a {\em layer} of a tree. The distance $d_i \in \{1,\ldots, d_{\max}\}$. 
We use $L_{\ell}$ to denote the set of buyers with distance $\ell$, i.e., $L_{\ell}\coloneqq \{i\mid d_i=\ell\}$. For each layer $L_{\ell}, 1 \leq \ell \leq d_{\max}$, $\Mlay$ assigns a probability, denoted by $\Pr^{\theta'}_{L_{\ell}}$. We write it as $\Pr_{L_{\ell}}$ when there is no ambiguity.
Given an infinite decreasing sequence $\gamma=(\gamma_1,\gamma_2,\ldots)$, where $\sum \gamma_i=1$, we define the probability for layer $L_{\ell}$ as
\begin{equation}
\label{eqn:layer}
    \Pr_{L_{\ell}}= \gamma_{\ell}
\end{equation}

\noindent {\em (b) Assignment of winning probability to the buyers on a layer.} On the $\ell$th layer, $\Mlay$ assigns buyer $i$ with $\theta_i'$ on layer $d_i=\ell$ with probability
\begin{equation}
\label{eqn:node2}
    \Pr_i(\theta_i')= \Pr_{L_{\ell}} \times \frac{\EXP(i)}{\EXP(L_{\ell})}
\end{equation}
Once the probability distribution over all possible outcomes is determined, $\Mlay$ computes the payment and randomly selects a winner $w$, following Step (3) of $\Mrec$.  

The complete process of layered DPDM is shown in Alg.~\ref{alg:layer}. Example~\ref{exa:layer} provides a running example of Step (2). 

\begin{example}
\label{exa:layer}
Apply $\Mlay$ paired with score function $\sigma(\theta,o_i)=v_i'$ and sequence $\gamma=\left\{\frac{1}{2^{\kappa+1}}\right\}_{\kappa\in \mathbb{N}}$ to the scenario in Figure~\ref{fig:inference}. 
Then in this graph, three layers, $L_1=\{a,b,c\}, L_2=\{d,e,f\}$, $L_3=\{g\}$  correspond to probabilities $\frac{1}{2},\frac{1}{4},\frac{1}{8}$, resp. In $L_1$, buyer $a$ wins with probability $\exp(10\epsilon)/(2(\exp(10\epsilon)+\exp(8\epsilon)+\exp(14\epsilon)))$. 
Similarly, we get the probabilities for $b$ and $c$. 
Then in $L_2$, $d$ wins with probability $\exp(9\epsilon)/(4(\exp(9\epsilon)+\exp(12\epsilon)+\exp(15\epsilon)))$. The probabilities for $e,f$ can be obtained in a similar way. 
Lastly, in $L_3$, buyer $g$ wins with probability $\frac{1}{8}$.
\end{example}

\begin{algorithm}[H]
	\caption{\small Layered DPDM $\Mlay$}
	\label{alg:layer}
	\begin{algorithmic}[1]
	\Require Reported global profile $\theta'$, privacy parameter $\epsilon$ and score function $\sigma$
	\Ensure  Allocation result $\pi(\theta')$ and payment result $p(\theta')$
	    \State Initialise $\pi(\theta')=\textbf{0}, p(\theta')=  \textbf{0}$ 
		\State Construct a profile digraph $G_{\theta'}=(V_{\theta'},E_{\theta'})$
		\State Construct a critical diffusion tree $T_{\theta'}$
	    \For{$1 \leq \ell \leq d_{\max}$}
	        \State Calculate the probability of layer $\ell$ by Equation~(\ref{eqn:layer})
	        \For{ $i \in L_{\ell}$}
	            \State Calculate  winning probability $\Pr_i$ by Eqn.~(\ref{eqn:node2})
	        \EndFor
	   	\EndFor
	    \State Randomly select a buyer $w$ with the distribution
	    \State Set $\pi_w=1$ and $p_w$ by Equation~(\ref{eqn:payment})
	\end{algorithmic}
\end{algorithm}


Next we show that layered DPDM $\Mlay$ has the desirable properties, including IC, IR and DP. 

\begin{lemma}
\label{lem:layerIC}
Layered DPDM $\Mlay$ is incentive compatible in terms of both valuations and neighbours. 
\end{lemma}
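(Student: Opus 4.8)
The plan is to mirror the structure of Lemma~\ref{lem:recursiveIC} and establish incentive compatibility separately for valuations and for neighbours, then combine the two. For the valuation part I would invoke the Archer--Tardos characterisation (Theorem~\ref{thm:IC}): it suffices to check that, holding the reported neighbour set $r_i'$ fixed, buyer $i$'s winning probability $\Pr_i(\theta_i')$ is monotonically non-decreasing in her reported valuation $v_i'$, and that the payment rule~\eqref{eqn:payment} induces exactly the Myerson expected payment $\E[p_i]=v_i'\Pr_i(\theta_i')-\int_0^{v_i'}\Pr_i((x,r_i'))\,dx$, matching condition~(2) when $r_i'$ is fixed. The payment identity is immediate from \eqref{eqn:payment} and is identical to the recursive case. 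For the monotonicity, I would note that in \eqref{eqn:node2} the layer probability $\Pr_{L_\ell}=\gamma_\ell$ is a constant independent of valuations (by \eqref{eqn:layer}), so $\Pr_i=\gamma_\ell\,\EXP(i)/\EXP(L_\ell)$ has the normalised-exponential form $\gamma_\ell\, x/(x+C)$ with $x=\exp(\epsilon\sigma(\theta',o_i))$ and $C=\sum_{\kappa\in L_\ell\setminus\{i\}}\exp(\epsilon\sigma(\theta',o_\kappa))$ fixed; since $\sigma$ is non-decreasing in $v_i'$, $x$ is non-decreasing, and $x/(x+C)$ is increasing in $x$, giving condition~(1).

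The heart of the proof is neighbour IC, and here the layered mechanism behaves more cleanly than the recursive one: I claim that hiding neighbours leaves $i$'s winning probability \emph{unchanged}. Following the standard convention, a deviation on neighbours means reporting some $r_i'\subseteq r_i$, which only deletes the outgoing edges $(i,k)$ with $k\in r_i\setminus r_i'$ from the profile digraph. Since deleting edges can never shorten a shortest path, every node's distance from $s$ is non-decreasing, and in particular $i$'s own depth $d_i=\ell$ is unchanged (no shortest path into $i$ uses $i$'s outgoing edges). The decisive observation is that any shortest path traversing a deleted edge $(i,k)$ has length at least $d_i+1=\ell+1$ to reach $k$, so only nodes at distance $>\ell$ can have their distances affected. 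Hence no node can enter or leave the layer $L_\ell$: members of $L_\ell$ keep distance $\ell$, nodes originally at distance $<\ell$ are untouched, and nodes originally at distance $>\ell$ stay at distance $>\ell$. Consequently $L_\ell$, and each of its members' scores (which depend only on their own valuations), are invariant, so $\EXP(L_\ell)$ and $\EXP(i)$ are unchanged, and therefore $\Pr_i=\gamma_\ell\,\EXP(i)/\EXP(L_\ell)$ is exactly unchanged.

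Finally I would combine the two parts as in the recursive proof. Because the expected utility of reporting valuation truthfully (for any fixed neighbour report $r_i'$) equals $\int_0^{v_i}\Pr_i((x,r_i'))\,dx$, and reporting valuation untruthfully can only lower it by Theorem~\ref{thm:IC}, the best response under any neighbour report is a truthful valuation; and since $\Pr_i((x,r_i'))$ is independent of whether neighbours are hidden, this value coincides for $r_i'$ and for the truthful $r_i$. Hence $\E_{\Mlay}[u_i((\theta_i,\theta_{-i}'))]\geq \E_{\Mlay}[u_i((\theta_i',\theta_{-i}''))]$ for every deviation $\theta_i'=(v_i',r_i')$, which is exactly IC in both dimensions. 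I expect the main obstacle to be the rigorous justification that $L_\ell$, and thus $\EXP(L_\ell)$, is unaffected by $i$'s neighbour report — that is, the distance-monotonicity-under-edge-deletion argument above — which replaces, and substantially simplifies, the delicate telescoping product of the recursive case, where the probability strictly decreased rather than staying fixed.
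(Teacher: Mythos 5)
Your valuation-IC half matches the paper's proof: monotonicity of Eqn.~\eqref{eqn:node2} in $v_i'$ (the layer weight $\gamma_\ell$ being a valuation-independent constant) plus the Myerson payment identity from Eqn.~\eqref{eqn:payment}, i.e., the two conditions of Theorem~\ref{thm:IC}. Your neighbour-IC half is in fact \emph{more} careful than the paper's: the paper merely asserts that ``$\Pr_i$ is determined by $d_i$ and $d_i$ is determined by her ancestors'', which tacitly assumes that the denominator $\EXP(L_{d_i})$ --- i.e., the membership of $i$'s entire layer, not just $i$'s own depth --- is unaffected by $i$'s report. You correctly identify this layer-invariance as the real crux and attempt to prove it, which the paper never does.

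However, your proof of that crux argues about the wrong notion of distance, and the mismatch is fatal under the paper's definitions. Your monotonicity-under-edge-deletion argument concerns shortest-path distance in the profile digraph $G_{\theta'}$, whereas the paper's $d_i$ is depth in the diffusion critical tree $T_{\theta'}$: layers are ``layers of a tree'', and in the proof of Lemma~\ref{lem:recursiveIC} buyer $i$ has exactly $d_i$ tree ancestors up to $s$. Tree depth is \emph{not} monotone under deletion of $i$'s outgoing edges, because removing paths enlarges other buyers' critical sets and re-hangs them in the tree, possibly at depth $\le d_i$. Concretely, take edges $s\to x$, $x\to y$, $x\to i$, $y\to j$, $i\to j$. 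Under truthful reports, $j$'s critical nodes are $s,x$, its parent is $x$, so $d_j=2=d_i$ and $L_2=\{y,i,j\}$. If $i$ hides $j$, then $j$'s critical nodes become $s,x,y$, its parent becomes $y$, so $d_j=3$: buyer $j$ exits $i$'s layer, $\EXP(L_2)$ strictly decreases, and by Eqn.~\eqref{eqn:node2} together with the utility formula $\int_0^{v_i}\Pr_i((x,r_i'))\,dx$, buyer $i$ strictly gains by hiding $j$. So under the tree-depth reading your ``decisive observation'' is false, and indeed the lemma itself fails --- which also invalidates the paper's one-line equality claim. Your argument (and the statement) is sound precisely if layers are redefined by shortest-path distance in $G_{\theta'}$; that redefinition appears to be the fix the mechanism needs, but it must be stated explicitly, since it is not what the paper's tree-based construction says.
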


\begin{proof}
The IC property in terms of valuations can be proved in a similar way for Lemma~\ref{lem:recursiveIC}. What we need to show is $\Pr_i$ is non-decreasing in her reported valuation $v_i'$. By Eqn.~(\ref{eqn:node2}), $\Pr_i((v_i',r_i'))$ is proportional to $\sigma(\theta,o_i)$, which is non-decreasing in $v_i'$.  

Then we show IC in terms of neighbours. For an arbitrary buyer $i$, her expected utility is $\E_{\Mlay}[u_i (\theta)] = (v_i - p_i(\theta))\Pr_i$ when the global profile is $\theta$. We plug in Eqn.~(\ref{eqn:payment}) (\ref{eqn:node2}) into $u_i(\theta)$. Then we can see $\Pr_i$ is determined by $d_i$ and $d_i$ is determined by her ancestors. Therefore, her utility will not be effected if she misreports her neighbours, i.e., $\E_{\Mlay}[u_i(((v_i,r_i'),\theta_{-i}))] = \E_{\Mlay}[u_i(((v_i,r_i),\theta_{-i}'))]$. 
\end{proof}

\begin{lemma}
\label{lem:layerIR}
Layered DPDM $\Mlay$ is individually rational in terms of both valuations and neighbours. 
\end{lemma}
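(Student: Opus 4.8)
The plan is to mirror the argument used for Lemma~\ref{lem:recursiveIR}, exploiting the fact that $\Mlay$ and $\Mrec$ share the identical payment rule in Equation~(\ref{eqn:payment}); only the winning probabilities differ. Fix a buyer $i$ with true profile $\theta_i=(v_i,r_i)$, suppose $i$ reports truthfully, and let the others report an arbitrary $\theta_{-i}'$. By the definition of expected utility in the randomised setting, $\E_{\Mlay}[u_i(\theta)]=(v_i-p_i(\theta))\Pr_i(\theta_i)$, where $\Pr_i(\theta_i)$ is given by Equation~(\ref{eqn:node2}).

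First I would substitute the payment from Equation~(\ref{eqn:payment}) into this expression. Writing $p_i=v_i-\int_0^{v_i}\Pr_i((x,r_i))\,dx/\Pr_i(\theta_i)$, the two copies of $v_i$ cancel and the factor $\Pr_i(\theta_i)$ divides out, leaving
\begin{equation*}
\E_{\Mlay}[u_i(\theta)]=\int_0^{v_i}\Pr_i((x,r_i))\,dx .
\end{equation*}
This is exactly the closed form obtained in the recursive case, which is expected since the two mechanisms agree on payments, and it holds uniformly over the reports $\theta_{-i}'$ of the other buyers.

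Next I would argue non-negativity of this integral. By Equation~(\ref{eqn:node2}), $\Pr_i((x,r_i))=\Pr_{L_{d_i}}\cdot\EXP(i)/\EXP(L_{d_i})$, which is a layer probability $\gamma_{d_i}\geq 0$ times a ratio of sums of exponentials of scores; since $\exp(\cdot)>0$, both $\EXP(i)$ and $\EXP(L_{d_i})$ are strictly positive, so $\Pr_i((x,r_i))\geq 0$ for every $x\in[0,v_i]$. Integrating a non-negative function over $[0,v_i]$ yields a non-negative value, hence $\E_{\Mlay}[u_i(\theta)]\geq 0$, establishing IR in terms of valuations.

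Finally, for the neighbour component I would invoke Lemma~\ref{lem:layerIC}: a buyer's winning probability, and therefore her expected utility, is determined solely by her depth $d_i$, so misreporting neighbours leaves $\E_{\Mlay}[u_i]$ unchanged and the non-negativity already established transfers verbatim. I expect no serious obstacle here; the argument is essentially a specialisation of the recursive case, and the only points requiring care are confirming that the shared payment rule produces the same cancellation when $\Pr_i$ is taken from the layered expression~(\ref{eqn:node2}), and that $\gamma_{d_i}$ together with the score exponentials keeps $\Pr_i$ non-negative across the whole integration range $[0,v_i]$.
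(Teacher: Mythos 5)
Your proposal is correct and follows essentially the same route as the paper's own proof: substitute the payment rule~(\ref{eqn:payment}) into the expected utility so that it collapses to $\int_0^{v_i}\Pr_i((x,r_i))\,dx$, then observe this integrand is non-negative. The paper states this in one line, while you additionally spell out why the layered probability~(\ref{eqn:node2}) is non-negative and why neighbour reports are immaterial, but these are elaborations of the same argument rather than a different one.
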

The proof of Lemma~\ref{lem:layerIR} follows the same reasoning as Lemma ~\ref{lem:recursiveIR}. See details in {\bf Appendix C}.

\begin{lemma}
\label{lem:layerDP}
Given a reported global profile $\theta'$, layered DPDM $\Mlay$ is $\epsilon \Delta \sigma$-differential private, where $\epsilon$ is the privacy parameter of $\Mlay$. 
\end{lemma}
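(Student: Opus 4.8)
The plan is to follow the same ratio-bounding strategy as in the proof of Lemma~\ref{lem:recursiveDP}, but to exploit the fact that in $\Mlay$ the per-layer masses $\Pr_{L_\ell}=\gamma_\ell$ are fixed constants, independent of any reported valuation. First I fix two reported global profiles $\theta,\theta'$ that agree everywhere except on a single buyer $i$'s reported valuation ($v_i$ in $\theta$, $v_i'$ in $\theta'$). Since the two profiles have identical neighbourhoods, the profile digraph and the critical diffusion tree are the same under both, so the layer partition $\{L_\ell\}$ and every depth $d_\kappa$ coincide; moreover the only quantity that changes between $\theta$ and $\theta'$ is the single term $\exp(\epsilon\sigma(\cdot,o_i))$, hence $\EXP(\cdot)$ changes only for sets that contain $i$. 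I then bound $\Pr[\Mlay(\theta)=o_w]/\Pr[\Mlay(\theta')=o_w]\le\exp(\epsilon\Delta\sigma)$ for every winner $w$, which is exactly the $\epsilon\Delta\sigma$-DP inequality of Eqn.~(\ref{eqn:DP}).

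Next I write the ratio using Eqn.~(\ref{eqn:node2}). Let $\ell=d_w$ be $w$'s layer. Since $\Pr_{L_\ell}=\gamma_\ell$ cancels,
\[
\frac{\Pr[\Mlay(\theta)=o_w]}{\Pr[\Mlay(\theta')=o_w]}=\frac{\EXP^{\theta}(w)}{\EXP^{\theta'}(w)}\times\frac{\EXP^{\theta'}(L_\ell)}{\EXP^{\theta}(L_\ell)}.
\]
This reduces the entire argument to a single layer, which is precisely why $\Mlay$ avoids the depth factor $d_{\max}$ appearing in Lemma~\ref{lem:recursiveDP}. I split into two cases. If $d_w\ne d_i$, then $i\notin L_\ell$ and $w\ne i$, so both factors equal $1$ and the ratio is exactly $1$: changing $i$'s valuation cannot affect any layer other than $i$'s own, so outcomes on other layers are perfectly indistinguishable.

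The main case is $d_w=d_i=\ell$, so $i,w\in L_\ell$. Write $a=\EXP^{\theta}(i)$, $a'=\EXP^{\theta'}(i)$ and $R=\EXP(L_\ell\setminus\{i\})$, noting $R$ is unchanged between the two profiles since it omits $i$. The first factor equals $1$ when $w\ne i$ and equals $a/a'$ when $w=i$, while the second factor is $(a'+R)/(a+R)$. The one nontrivial estimate is the elementary inequality $(a'+R)/(a+R)\le\max(a'/a,\,1)$, which says that adding the common positive mass $R$ to numerator and denominator pulls the quotient toward $1$. Combining the two factors, in either subcase the product is at most $\max(a/a',\,a'/a)=\exp\!\big(\epsilon\,|\sigma(\theta,o_i)-\sigma(\theta',o_i)|\big)\le\exp(\epsilon\Delta\sigma)$, using the definition of $\Delta\sigma$. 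Outcomes of fixed, valuation-independent probability (for instance the event that no buyer is selected) have ratio $1$ and need no separate treatment. Since $\theta,\theta'$ were an arbitrary pair differing on one valuation and $w$ was arbitrary, the $\epsilon\Delta\sigma$-DP bound follows.

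I expect the only genuine obstacle to be isolating the observation that a single-buyer valuation change is confined to one layer, and then verifying the monotone-transfer inequality $(a'+R)/(a+R)\le\max(a'/a,1)$; everything else is bookkeeping. The conceptual payoff worth highlighting is that fixing the layer masses $\gamma_\ell$ decouples the layers, so the privacy loss does not compound with depth --- this is exactly what upgrades the bound from the recursive mechanism's $\epsilon d_{\max}\Delta\sigma$ to $\epsilon\Delta\sigma$.
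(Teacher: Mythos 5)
Your proof is correct and takes essentially the same route as the paper's: the same ratio decomposition via Eqn.~(\ref{eqn:node2}) with the layer mass $\gamma_\ell$ cancelling, the same observation that outcomes outside buyer $i$'s own layer have ratio exactly $1$, and a bound of $\exp(\epsilon\Delta\sigma)$ on the product of the two remaining factors. The only cosmetic difference is the final estimate: the paper splits on the sign of $v_i'-v_i$, bounding one factor by $1$ and the other by shifting every score in the layer by $\Delta\sigma$, whereas you isolate the single changed term and use the mediant inequality $(a'+R)/(a+R)\le\max\left(a'/a,\,1\right)$ --- both arguments are equally elementary and yield the same bound.
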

Lem.~\ref{lem:layerDP} is proved by showing in Eqn.~\eqref{eqn:node2}, the change on a single buyer's valuation is bounded by $\epsilon \Delta \sigma$. Due to space limit, the proof of Lem.~\ref{lem:layerDP} is deferred to {\bf App.~\ref{app:lem:layerDP}}. 
The next thm. then easily follows from Lem.~\ref{lem:layerIC}, \ref{lem:layerIR} and \ref{lem:layerDP}.

\begin{theorem}
\label{thm:layer}
Layered DPDM  $\Mlay$ is IC, IR and $\epsilon \Delta \sigma$-DP. 
\end{theorem}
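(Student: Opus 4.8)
The plan is to observe that Theorem~\ref{thm:layer} is simply the conjunction of three properties---incentive compatibility, individual rationality, and $\epsilon\Delta\sigma$-differential privacy---each of which has already been isolated as a separate lemma for $\Mlay$. Since these notions are logically independent (IC and IR constrain the allocation/payment pair pointwise, while DP constrains the distribution over outcomes across neighbouring profiles), no interaction argument is needed: I would simply assemble Lemmas~\ref{lem:layerIC}, \ref{lem:layerIR} and \ref{lem:layerDP} into one statement. Thus the proof of the theorem itself is a one-line citation of the three lemmas, and all the genuine work lives inside them.

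For the IC half, I would appeal to Lemma~\ref{lem:layerIC}, which handles both dimensions. For valuations, the argument routes through Theorem~\ref{thm:IC}: Eqn.~\eqref{eqn:node2} writes $\Pr_i(\theta_i')=\Pr_{L_\ell}\cdot\EXP(i)/\EXP(L_\ell)$, and since $\sigma$ is non-decreasing in $v_i'$ the numerator---hence $\Pr_i$---is monotone in $v_i'$, while the payment rule~\eqref{eqn:payment} is exactly the expected payment of Theorem~\ref{thm:IC}, so both of its conditions hold with $r_i'$ fixed. For neighbours, the key observation is that $\Pr_i$ depends on the neighbourhood only through the layer index $d_i$, which is pinned down by $i$'s ancestors and is therefore invariant to $i$'s own reported neighbour set; hence her expected utility is unaffected by misreporting neighbours. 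Individual rationality follows from Lemma~\ref{lem:layerIR} by the same computation as Lemma~\ref{lem:recursiveIR}: substituting the payment~\eqref{eqn:payment} yields $\E_{\Mlay}[u_i(\theta)]=\int_0^{v_i}\Pr_i((x,r_i))\,dx\ge 0$.

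The only substantive ingredient is differential privacy, supplied by Lemma~\ref{lem:layerDP}, and this is where I expect any difficulty to concentrate. Fixing a winner $w$ on layer $\ell$ and two profiles $\theta,\theta'$ differing in one buyer's valuation, the ratio $\Pr[\Mlay(\theta)=o_w]/\Pr[\Mlay(\theta')=o_w]$ simplifies dramatically in the layered case: the layer weight $\gamma_\ell$ in Eqn.~\eqref{eqn:node2} is a fixed constant independent of any valuation, so it cancels, leaving only the ratio of the $\EXP(w)/\EXP(L_\ell)$ terms. A single-coordinate change perturbs at most one summand of $\EXP(L_\ell)$ and possibly $\EXP(w)$, and the definition of $\Delta\sigma$ bounds each exponent shift by $\epsilon\Delta\sigma$, giving the factor $\exp(\epsilon\Delta\sigma)$ in one step---without the telescoping product over $d_w-1$ ancestors that the recursive analysis of Lemma~\ref{lem:recursiveDP} required. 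This is precisely why $\Mlay$ attains the sharper $\epsilon\Delta\sigma$ guarantee rather than the depth-scaled $\epsilon d_{\max}\Delta\sigma$ bound of $\Mrec$. With all three lemmas in hand, the theorem is immediate.
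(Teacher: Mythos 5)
Your proposal matches the paper exactly: Theorem~\ref{thm:layer} is proved there in one line as the conjunction of Lemmas~\ref{lem:layerIC}, \ref{lem:layerIR} and \ref{lem:layerDP}, and your summaries of those lemmas' proofs (monotonicity plus Theorem~\ref{thm:IC} for valuations, layer-index invariance for neighbours, cancellation of $\gamma_\ell$ for DP) track the paper's arguments. One small caution on your DP sketch: bounding ``each exponent shift'' in the two factors $\EXP^{\theta}(w)/\EXP^{\theta'}(w)$ and $\EXP^{\theta'}(L_\ell)/\EXP^{\theta}(L_\ell)$ separately would only give $\exp(2\epsilon\Delta\sigma)$; the paper's Lemma~\ref{lem:layerDP} gets the factor $\exp(\epsilon\Delta\sigma)$ via a case analysis on the sign of the valuation change, under which at most one of the two factors exceeds $1$.
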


Next we analyse the expected social welfare of $\Mlay$. 
We consider a hypothetical scenario where the exponential mechanism is applied to the whole social network where the seller knows all buyers. In this scenario, the auction information is diffused to all buyers without any incentive. We call such a mechanism as {\em exponential mechanism with diffusion (EMD)}. EMD has the optimal expected social welfare than all DPDMs and thus is used as the benchmark.

\begin{theorem}
\label{thm:sw}
Given a global profile $\theta$, the expected social welfare of layered DPDM $\Mlay$ is at least $\gamma_{d_{\max}}{\E_{\EMD}[sw_{\EMD}(\theta)]}$.
\end{theorem}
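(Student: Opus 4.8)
The plan is to reduce the claim to a comparison of winning probabilities, layer by layer. First I would record the standard fact that for a single-unit mechanism the payments merely transfer between the winner and the seller, so they cancel in the social welfare. Writing $u_s$ for the seller's revenue, $sw_M(\theta) = u_s(\theta) + \sum_{i\in N} u_i(\theta) = \sum_{i\in N} p_i(\theta) + \sum_{i\in N}\bigl(v_i\pi_i(\theta) - p_i(\theta)\bigr) = \sum_{i\in N} v_i\pi_i(\theta)$. Taking expectation over the mechanism's randomness, and using that at most one buyer is allocated, gives the identity $\E_M[sw_M(\theta)] = \sum_{i\in N} v_i \Pr_i$, where $\Pr_i$ is $i$'s winning probability. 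Since both $\Mlay$ and $\EMD$ are IC, truthful reporting holds, and this identity applies to both with the true valuations $v_i$.

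Next I would write the two winning probabilities explicitly. For $\EMD$ the exponential mechanism runs over the whole buyer set, so $\Pr_i^{\EMD} = \EXP(i)/\EXP(N)$ and $\E_{\EMD}[sw_{\EMD}(\theta)] = \sum_{i\in N} v_i\,\EXP(i)/\EXP(N)$. For $\Mlay$, Eqn.~\eqref{eqn:node2} gives $\Pr_i = \gamma_{d_i}\,\EXP(i)/\EXP(L_{d_i})$, so grouping buyers by layer yields $\E_{\Mlay}[sw_{\Mlay}(\theta)] = \sum_{\ell=1}^{d_{\max}} \frac{\gamma_\ell}{\EXP(L_\ell)}\sum_{i\in L_\ell} v_i\,\EXP(i)$.

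The final step is a term-by-term comparison across layers. Because the layers $L_1,\dots,L_{d_{\max}}$ partition $N$, I can write $\E_{\EMD}[sw_{\EMD}(\theta)] = \frac{1}{\EXP(N)}\sum_{\ell=1}^{d_{\max}}\sum_{i\in L_\ell} v_i\,\EXP(i)$ in the same layered form. For each fixed $\ell \le d_{\max}$ I would invoke two monotonicities: (i) $\gamma$ is decreasing, so $\gamma_\ell \ge \gamma_{d_{\max}}$; and (ii) $L_\ell \subseteq N$, so $\EXP(L_\ell) \le \EXP(N)$ and hence $1/\EXP(L_\ell) \ge 1/\EXP(N)$. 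Multiplying gives $\gamma_\ell/\EXP(L_\ell) \ge \gamma_{d_{\max}}/\EXP(N)$, and since $v_i,\EXP(i)\ge 0$ this yields the per-layer bound $\frac{\gamma_\ell}{\EXP(L_\ell)}\sum_{i\in L_\ell} v_i\,\EXP(i) \ge \frac{\gamma_{d_{\max}}}{\EXP(N)}\sum_{i\in L_\ell} v_i\,\EXP(i)$. Summing over $\ell$ produces exactly $\E_{\Mlay}[sw_{\Mlay}(\theta)] \ge \gamma_{d_{\max}}\,\E_{\EMD}[sw_{\EMD}(\theta)]$.

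The computation is essentially routine; the only conceptual point worth care is the social-welfare reduction $\E_M[sw_M] = \sum_i v_i\Pr_i$, which relies on payments cancelling between seller and buyers, so that the benchmark and the mechanism can be compared purely through their allocation distributions. After that, the bound hinges only on the monotonicity of $\gamma$ and the monotonicity of $\EXP$ under set inclusion, both immediate, and the fact that one is free to discard the probability mass $1 - \sum_{\ell\le d_{\max}}\gamma_\ell$ on the ``no winner'' outcome since it contributes zero to the welfare.
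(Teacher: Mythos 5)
Your proof is correct and follows essentially the same route as the paper's: both reduce the expected social welfare to $\sum_i v_i \Pr_i$ and then apply the two monotonicity facts $\gamma_\ell \ge \gamma_{d_{\max}}$ and $\EXP(L_\ell) \le \EXP(N)$ (the paper writes this per buyer rather than per layer, which is the same computation). Your explicit justification of the payment-cancellation identity $\E_M[sw_M(\theta)] = \sum_{i\in N} v_i \Pr_i$ is a detail the paper leaves implicit, but it is not a different approach.
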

\begin{proof}
Given a global profile $\theta$, the expected social welfare $\E_{\Mlay}[sw_{\Mlay}(\theta)]$ of $\Mlay$ is
\vspace{-0.1cm}

\begin{equation*}
\begin{aligned}
\sum_{i\in V}{\left(v_i \times \Pr^{\Mlay}_{i}(\theta_i)\right)} &=\sum_{i\in V}{ v_i \frac{ \exp(\epsilon,\sigma(\theta,o_i))}{\sum_{j\in L_{d_i}}{\frac{1}{\gamma_{d_i}}} \exp(\epsilon,\sigma(\theta,o_j))}}\\
&= \gamma_{d_{\max}} \E_{\Mlay}[sw_{\Mlay}(\theta)]
\end{aligned}
\end{equation*}
See full derivation in {\bf Appendix~\ref{app:thm:sw}}.
\end{proof}
The next result is an easy corollary. 
\begin{corollary}
For $\gamma=(\frac{a-1}{a},\frac{a-1}{a^2},\dots)$, where $a>1$, layered DPDM achieves an expected social welfare $\geq \frac{a-1}{a^{d_{\max}}}\E_{\EMD}[sw_{\EMD}(\theta)]$. \qed
\end{corollary}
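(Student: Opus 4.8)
The plan is to recognise this as an immediate specialisation of Theorem~\ref{thm:sw}, so the only substantive work is to confirm that the proposed sequence $\gamma$ is an admissible input to $\Mlay$ and then to substitute its closed form into the bound already established.

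First I would verify that $\gamma=(\gamma_1,\gamma_2,\dots)$ with $\gamma_\ell=\frac{a-1}{a^\ell}$ satisfies the two requirements imposed on the sequence in Section~\ref{sec:layered}: that it be decreasing and that $\sum_\ell \gamma_\ell=1$. Monotonicity is immediate, since $\gamma_{\ell+1}/\gamma_\ell=1/a<1$ whenever $a>1$, so the terms strictly decrease. For the normalisation I would sum the geometric series
\[
\sum_{\ell=1}^{\infty}\frac{a-1}{a^{\ell}}
=(a-1)\cdot\frac{1/a}{1-1/a}
=(a-1)\cdot\frac{1}{a-1}
=1 ,
\]
so that $\gamma$ is a genuine probability distribution over the layers and $\Mlay$ is well defined for this choice.

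Next I would invoke Theorem~\ref{thm:sw} directly: for any admissible sequence it guarantees $\E_{\Mlay}[sw_{\Mlay}(\theta)]\ge \gamma_{d_{\max}}\,\E_{\EMD}[sw_{\EMD}(\theta)]$. Substituting $\gamma_{d_{\max}}=\frac{a-1}{a^{d_{\max}}}$ yields exactly the claimed bound $\frac{a-1}{a^{d_{\max}}}\,\E_{\EMD}[sw_{\EMD}(\theta)]$, which completes the argument. There is no real obstacle here, as the statement is a routine corollary; the only point requiring a moment's care is the geometric-series normalisation, since the guarantee of Theorem~\ref{thm:sw} applies only when $\gamma$ is a valid decreasing probability sequence over layers. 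Once that check is in place, the result follows by substitution.
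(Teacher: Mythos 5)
Your proof is correct and matches the paper's intent: the paper treats this as an immediate corollary of Theorem~\ref{thm:sw}, obtained by substituting $\gamma_{d_{\max}}=\frac{a-1}{a^{d_{\max}}}$. Your additional check that the geometric sequence is decreasing and sums to $1$ (so that it is an admissible input to $\Mlay$) is exactly the right point of care, and nothing more is needed.
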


\section{Experiment}
\label{sec:experiment}
We evaluate the performances of $\Mrec$ and $\Mlay$, in terms of social welfare under different privacy levels and valuations on three real world social network datasets. We also analyse the effect of sequence $\gamma=(\frac{a-1}{a},\frac{a-1}{a^2},\ldots)$ on the performance of $\Mlay$.  For each setup, we run $5000$ times and get average social welfare.

{\bf Dataset.} We use three real world network datasets, including Hamsterster friendships with $1,858$ nodes and $12,534$ edges \citep{Kunegis2013KONECT}, Facebook with $4,039$ nodes and $88,234$ edges  \citep{McAuley2012Learning} and Email-Eu-core network $1,005$ nodes and $25,571$ edges \citep{Yin2017Local}. 
For each dataset, the seller $s$ is randomly selected.

{\bf Valuation.} The network datasets contain no information about buyers' valuations.
We generate random numbers as the valuations. 
We consider two commonly used distributions, normal distribution $v_i \sim \mu(50,10)$ and uniform distribution $v_i \sim U{[0,100]}$. We set the parameters such that the average value are same. Nevertheless, our aim is to reveal the general pattern under different distributions and these patterns are independent from these parameters.  

{\bf Privacy parameter.}
To verify the performance of our mechanisms, we also vary privacy parameter $\epsilon \in \{0.01, 0.05, 0.1, 0.15, 0.2, 0.25, 0.3\}$. 
Lem.~\ref{lem:layerDP} and~\ref{lem:recursiveDP} show that, under the same input $\epsilon$, $\Mlay$ and $\Mrec$ ensure different privacy levels. To see the performance under the same guaranteed privacy, we set the input 
$\epsilon$ as $\{ 0.01, 0.05, 0.1, 0.15, 0.2, 0.25, 0.3\}$ for $\Mrec$ and $\{0.01, 0.05, 0.1, 0.15, 0.2, 0.25, 0.3\} d_{\max}$ for the others.

{\bf Score function.} We use  linear function, $\sigma(\theta,o_i)= v_i$, as the score function. 
The linear score function is widely used in previous DP auctions, e.g., \citep{mcsherry2007mechanism,Xu2017PADS}. 

{\bf Decreasing sequence.} For $\Mlay$, we consider different value of $a\in \{1.25,1.5,2,3\}$ in $\gamma=(\frac{a-1}{a},\frac{a-1}{a^2},\dots)$, and evaluate the impact of $a$ on expected social welfare. 

{\bf Benchmark.} Since there is no existing DPDM that can be applied in our problem, we design two hypothetical benchmarks. {\bf Exponential mechanism without diffusion (EMWD)}: We apply the exponential mechanism only to the seller's neighbours. 
The expected social welfare of EMWD can be seen as the lower bound among all DPDMs. 
{\bf Exponential mechanism diffusion (EMD)}: 
See the description of EMD in Section~\ref{sec:layered}. We also compare with IDM \citep{li2017mechanism} (See {\bf App.~\ref{app:IDM}}), which is not DP, to see how much social welfare is sacrificed to achieve DP.



{\bf Results.} Overall, 
when comparing to IDM, the difference in social welfare of the DPDMs decreases with $\epsilon$ increases. 
Then, among DPDMs, EMD performs best in most cases, followed by $\Mrec$ and $\Mlay$. Particularly, $\Mrec$ performs very well. 
The lines of $\Mrec$ even coincide with those of EMD in some cases, e.g., on Facebook \& Email-Eu-core in Fig.~\ref{fig:all}. The deviation of $\Mrec$ from EMD is at most $2.62\%$. 
$\Mrec$ performs better than the layered counterpart. EMWD returns the worst expected social welfare. 
The reason why $\Mrec$ has better expected social welfare than $\Mlay$ is that in $\Mlay$, a probability of $1-\sum_{\ell=1}^{d_{\max}} \gamma_{\ell}$ is not distributed to any buyer, which means that the seller does not sell the item and the social welfare is $0$ with this probability. 

Next we show the effect of different parameters. 
{\bf (1) Dataset.} As shown in each column of Fig.~\ref{fig:all}, the same pattern can be found for different datasets.
{\bf (2) Privacy parameter. } The expected $sw$ increases with $\epsilon$. The less privacy is required, the less noisy is added, and thus the higher probability of returning a result with good social welfare. 
{\bf (3) Valuation.} The $1$st and the $2$nd row of Fig.~\ref{fig:all} show the results with normal and uniform distributions, resp.. 
Under both distributions, $\Mrec$ performs better than $\Mlay$. 
{\bf (4) Sequence.} Fig.~\ref{fig:layer} shows the average social welfare is best when $a=1.5,2$ for Hamsterster and when $a=2,3$ for Facebook and Email-Eu-core. 
When a buyer $i$ with the highest valuation is on a deeper layer, a smaller $a$ leads to a larger probability for the layer where $i$ is and also a larger probability for $i$. The results verify this argument. In Hamsterster (Facebook, Email-Eu-core), the buyers with the highest valuation are on the $4$th ($3$rd, $2$nd) layer.
{\bf (5) same DP.} Fig.~\ref{fig:sameDP} shows when the realised privacy is large, the avg. social welfare of $\Mrec$ is greater than that of $\Mlay$, while when the realised privacy is small, $\Mlay$ is better. 

\vspace{-0.4cm}




\begin{figure}[H]
    \centering
    \includegraphics[width=\columnwidth]{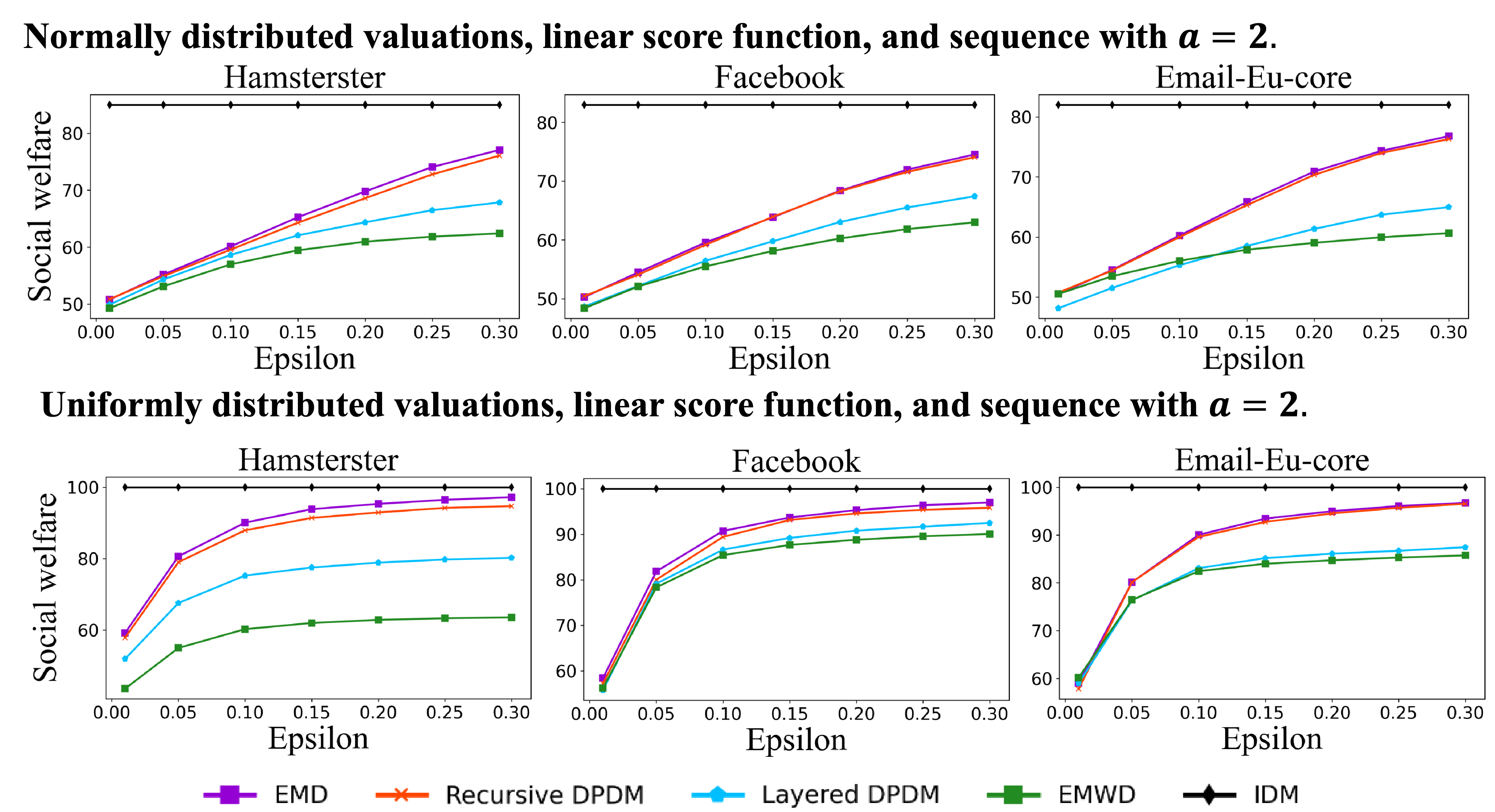}
    \caption{\small Average social welfare of $\Mlay$, $\Mrec$, EMD, EMWD and IDM with different distributions under fixed sequence with $a=2$. Normal distribution is shown in the first row and uniform distribution is shown in the second row.}
    \label{fig:all}
\end{figure}

\vspace{-0.5cm} 
\begin{figure}[H]
    \centering
    \includegraphics[width=\columnwidth]{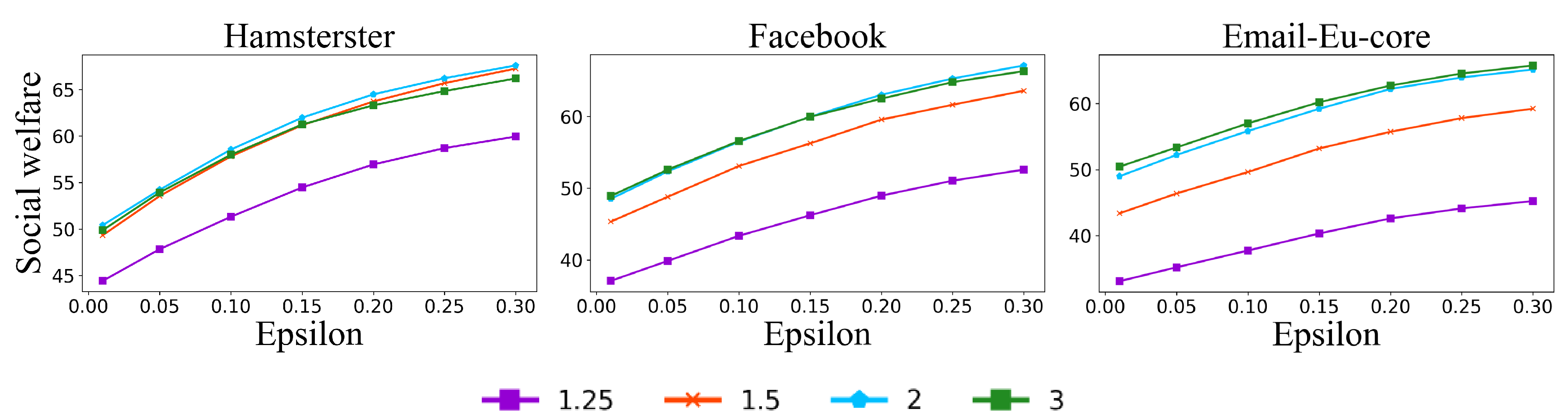}
    \caption{\small Average social welfare of $\Mlay$ with different values of $a$, under normally distributed valuations and linear function.}
    \label{fig:layer}
\end{figure}

\vspace{-0.5cm}
\begin{figure}[H]
    \centering
    \includegraphics[width=\columnwidth]{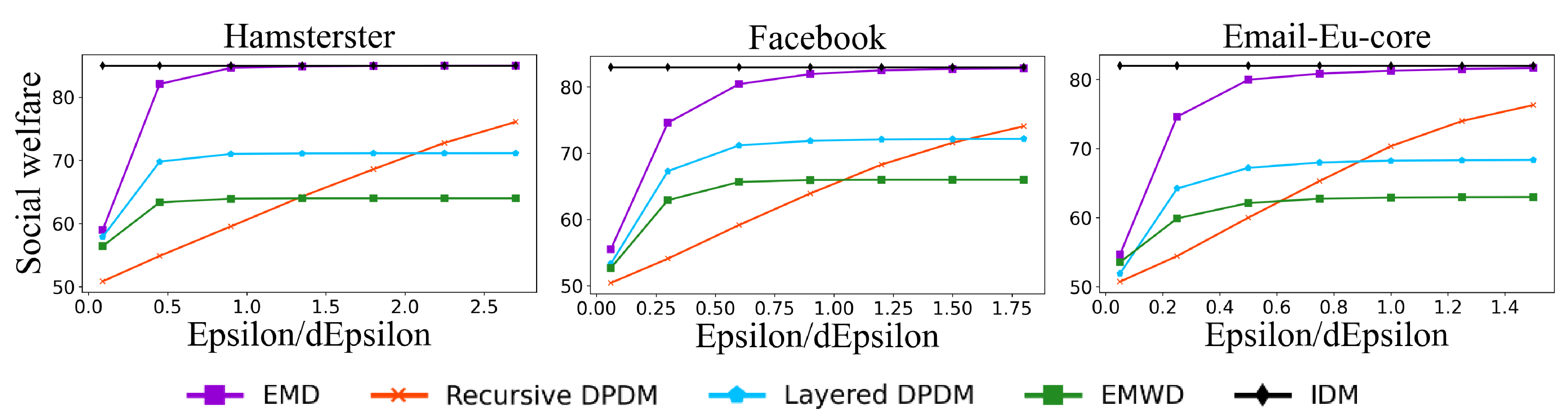}
    \caption{\small Average social welfare of $\Mlay$, $\Mrec$, EMD, EMWD and IDM under normal distribution, linear function and sequence with $a=2$. Horizontal axis represents the value of $\epsilon$ for EMD, EMWD \& $\Mlay$, and $d_{\max}\epsilon$ for $\Mrec$.} 
    \label{fig:sameDP}
\end{figure}

\section{Conclusion and future work}
We consider the problem of designing diffusion auction mechanisms that sells a single item on social networks while preserving valuation privacy. We propose two DPDMs, recursive DPDM and layered DPDM. Also, we theoretically show their incentive and privacy properties and empirically show their good performances in social welfare. We could extend this study by considering the following questions: (1) How to design a DPDM for multi-item auctions? (2) How to design a DPDM that preserves both valuation and neighbourhood privacy? and (3) How to design a DPDM that is group IC where no group of buyers can benefit from joint misreporting?

\bibliography{uai2023-template}

\appendix

\newpage

\noindent {\LARGE \bf Appendix}

\section{IDM}\label{app:IDM}

Here, we introduce the first diffusion auction for selling single item, IDM [13]. A key concept of IDM is diffusion critical sequence. Given a profile digraph $G_{\theta'}$, for any buyers $i,j \in V_{\theta'}$, $i$ is {\em $\theta'$-critical} to $j$, denoted by $i \preceq_{\theta'} j$, if all paths from $s$ to $j$ in $G_{\theta'}$ go through $i$. A {\em diffusion critical sequence} of $i$, denoted by $C_i$, is a sequence of all diffusion critical nodes of $i$ and $i$ itself ordered by $\theta'$-critical relation. That is, $C_i=(x_1,x_2\ldots,x_k,i)$, where $x_1 \preceq_{\theta'} x_2 \preceq_{\theta'}\ldots\preceq_{\theta'} x_k\preceq_{\theta'}i$. Based on this concept, IDM works as follows. IDM first locates the buyer $m$ with the highest valuation among all buyers. Then it allocates the item to the buyer $w$, who has the highest valuation when the buyers after $w$ are not considered. The winner $w$ pays the highest bid without her participation, and each diffusion critical node is rewarded by the increased payment due to her participation. 

\section{Proof of Lemma~\ref{lem:recursiveDP}}
\label{app:lem:recursiveDP}
{\bf Lemma~\ref{lem:recursiveDP}.} {\it Given a reported global profile $\theta'$, recursive DPDM $\Mrec$ is $\epsilon d_{\max} \Delta \sigma$-differential privacy, where
$\epsilon$ is the privacy parameter to $\Mrec$. }

\begin{proof}
Given two reported global profiles $\theta$ and $\theta'$ that differ in an arbitrary buyer $i$'s reported valuation such that $i$ reports $v_i$ in $\theta$ and $v_i'$ in $\theta'$, we consider the probabilities that $M(\theta)$ and $M(\theta')$ return a winner $w$. 
In a critical diffusion tree $T_{\theta}$, let $d_w$ denote the depth of $w$, $a_w^\ell$ be an ancestor of $w$ with distance $\ell$. Also, let $\EXP^{\theta}(T(a_w^1)-T(w))$ and $\EXP^{\theta'}(T(a_w^1)-T(w))$ denote the value derived from $\theta$ and $\theta'$, respectively. Then by Equation~(\ref{eqn:node}), we have 
\begin{equation*}
\label{eqn:winner}
\begin{aligned}
\frac{\Pr[M(\theta)=o_w]}{\Pr[M(\theta')=o_w]} &=\frac{\frac{\EXP(w)}{\EXP^{\theta}(T(a_w^1)-T(w))}}{\frac{\EXP^{\theta'}(w)}{\EXP^{\theta'}(T(a_w^1)-T(w))}} \\
& \qquad \times \frac{\Pr^{\theta}_{T[a_w^1]}-\Pr^{\theta}_{a_w^1}}{\Pr^{\theta'}_{T[a_w^1]}-\Pr^{\theta'}_{a_w^1}}
\end{aligned}
\end{equation*}

We repeatedly replace $\Pr^{\theta}_{T[a_w^{\ell}]}$, $\Pr^{\theta}_{a_w^{\ell}}$,  $\Pr^{\theta'}_{T[a_w^{\ell}]}$, $\Pr^{\theta'}_{a_w^{\ell}}$ by expressions of $a_w^{\ell+1}$ until we get an expression of $s$. For each distance $0 \leq \ell < d_w$, we denote  $\frac{\EXP(T[a_w^{\ell}])
}{\EXP(T(a_w^{\ell+1}))}$ as $A^{\theta}_{\ell}$,
$\frac{\EXP(a_w^{\ell})}{\EXP(T(a_w^{\ell+1})\setminus T(a_w^{\ell}))}$ as $B^{\theta}_{\ell}$. For $\theta'$, we have similar notations as $A^{\theta'}_{\ell}$ and $B^{\theta'}_{\ell}$. Then the above ratio can be written as 
\begin{equation*}
\begin{aligned}
\frac{\Pr[M(\theta)=o_w]}{\Pr[M(\theta')=o_w]}= \frac{B_0^{\theta}}{B_0^{\theta'}}\times \prod_{\ell=1}^{d_w-1 }{\frac{A^{\theta}_{\ell}-B^{\theta}_{\ell}}{A^{\theta'}_{\ell}-B^{\theta'}_{\ell}}}
\end{aligned}
\end{equation*}

Next we show for each $0\leq \ell < d_w$, $\frac{A^{\theta}_{\ell}-B^{\theta}_{\ell}}{A^{\theta'}_{\ell}-B^{\theta'}_{\ell}}$ is bounded by $\exp(\epsilon \Delta \sigma)$. To prove it, we first show for for each $\ell$, $(A^{\theta}_{\ell}-A^{\theta'}_{\ell})\times(B^{\theta}_{\ell}-B^{\theta'}_{\ell})\geq 0$ by cases.\\
(1) When $i\in T[a_w^{\ell}]$, we have $A^{\theta}_{\ell}-A^{\theta'}_{\ell} \leq 0,  B^{\theta}_{\ell}-B^{\theta'}_{\ell}\leq 0$ or $A^{\theta}_{\ell}-A^{\theta'}_{\ell} \geq 0,  B^{\theta}_{\ell}-B^{\theta'}_{\ell}\geq 0$ \\
(2) When $i\in T[a_w^{\ell+1}]\setminus T[a_w^{\ell}]$, then $A^{\theta}_{\ell}-A^{\theta'}_{\ell} \leq 0,  B^{\theta}_{\ell}-B^{\theta'}_{\ell}\leq 0$ or $A^{\theta}_{\ell}-A^{\theta'}_{\ell} \geq 0,  B^{\theta}_{\ell}-B^{\theta'}_{\ell}\geq 0$ \\
(3) When $i \notin T[a_w^{\ell+1}]$, then $A^{\theta}_{\ell}-A^{\theta'}_{\ell}= 0,  B^{\theta}_{\ell}-B^{\theta'}_{\ell}= 0$.

Without loss of generality, we assume that $A^{\theta'}_{\ell}= \alpha_1 A^{\theta}_{\ell}, B^{\theta'}_{\ell}= \alpha_2 B^{\theta}_{\ell}, \alpha_1, \alpha_2 \in \R^+$. Plug in these two equations, and we get
$$\frac{A^{\theta}_{\ell}-B^{\theta}_{\ell}}{A^{\theta'}_{\ell}-B^{\theta'}_{\ell}}  =\frac{A^{\theta}_{\ell}-B^{\theta}_{\ell}}{\alpha_1 A^{\theta}_{\ell}- \alpha_2 B^{\theta}_{\ell}}.$$
Then we consider two cases: \\
(1) When $\alpha_1\geq\alpha_2$, we have  $\frac{A^{\theta}_{\ell}-B^{\theta}_{\ell}}{\alpha_1 A^{\theta}_{\ell}- \alpha_2 B^{\theta}_{\ell}} \leq \frac{A^{\theta}_{\ell}-B^{\theta}_{\ell}}{\alpha_1 A^{\theta}_{\ell}- \alpha_1 B^{\theta}_{\ell}} \leq \frac{1}{\alpha_1}.$ \\
(2) When $\alpha_2\geq\alpha_1$, we have  $\frac{A^{\theta}_{\ell}-B^{\theta}_{\ell}}{\alpha_1 A^{\theta}_{\ell}- \alpha_2 B^{\theta}_{\ell}} \leq \frac{A^{\theta}_{\ell}-B^{\theta}_{\ell}}{\alpha_2 A^{\theta}_{\ell}- \alpha_2 B^{\theta}_{\ell}} \leq \frac{1}{\alpha_2}.$ \\

After that, we show that both $\frac{1}{\alpha_1}$ and $\frac{1}{\alpha_2}$ are bounded by $\exp(\epsilon \Delta \sigma)$ as follows.
By definition of $\alpha_1$, we have 
$\frac{1}{\alpha_1} = \frac{A^{\theta}_{\ell}}{A^{\theta'}_{\ell}}=\frac{\EXP^{\theta}(T[a_w^{\ell}])}{\EXP^{\theta'}(T[a_w^{\ell}])}$
$\times \frac{\EXP^{\theta'}(T(a_w^{\ell+1}))}{\EXP^{\theta}(T(a_w^{\ell+1}))}$.\\
(1) When valuation $v_i'\leq v_i$, the second ratio is at most $1$. Then we have 
\begin{equation*}
\begin{aligned}
\frac{1}{\alpha_1} &= \frac{A^{\theta}_{\ell}}{A^{\theta'}_{\ell}} 
\leq \frac{\EXP^{\theta}(T[a_w^{\ell}])}{\EXP^{\theta'}(T[a_w^{\ell}])}\\
&\leq \frac{\sum_{k\in T[a_w^{\ell}]}\exp(\epsilon \sigma (\theta,o_k))}{\sum_{k\in T[a_w^{\ell}]}\exp(\epsilon (\sigma (\theta,o_k) - \Delta \sigma))}
\leq \exp(\epsilon \Delta \sigma)
\end{aligned}
\end{equation*}
(2) When valuation $v_i'\geq v_i$, the first ratio is at most $1$. We have
\begin{equation*}
\begin{aligned}
\frac{1}{\alpha_1} & = \frac{A^{\theta}_{\ell}}{A^{\theta'}_{\ell}} 
\leq \frac{\EXP^{\theta'}(T(a_w^{\ell+1}))}{\EXP^{\theta}(T(a_w^{\ell+1}))}\\
& \leq \frac{\sum_{k\in T(a_w^{\ell+1})}\exp(\epsilon (\sigma (\theta,o_k)+\Delta \sigma))}{{\sum_{k\in T(a_w^{\ell+1})}\exp(\epsilon \sigma (\theta,o_k))}} 
\leq \exp(\epsilon \Delta \sigma)
\end{aligned}
\end{equation*}
In a similar way, we can show that $\frac{1}{\alpha_2} \leq \exp(\epsilon \Delta \sigma)$. 

Therefore we have \begin{equation*}
\begin{aligned}
\frac{\Pr[M(\theta)=o_w]}{\Pr[M(\theta')=o_w]} &\leq 
\exp(\epsilon \Delta \sigma) \times \prod_{1\leq \ell < d_w} \exp(\epsilon \Delta \sigma) \\
&\leq  \exp(\epsilon d_w \Delta \sigma) \leq \exp(\epsilon d_{\max} \Delta \sigma)
\end{aligned}
\end{equation*}
\end{proof}

\section{Proof of Lemma~\ref{lem:layerIR}}\label{app:lem:layerIR}

{\bf Lemma~\ref{lem:layerIR}. }{\em Layered DPDM $\Mlay$ is individually rational in terms of both valuations and neighbours.}
\begin{proof}
Given a global profile $\theta$, for each buyer $i$ with $(v_i,r_i)$, we have 
\begin{equation*}
    \begin{aligned}
        \E_{\Mlay}[u_i(\theta)] &=(v_i-p_i(\theta)) \Pr_i(\theta_i) \\
        & = \int_{0}^{v_{i}}\Pr_i^{\Mlay}((x,r_i))dx \geq 0.
    \end{aligned}
\end{equation*}
Therefore, the lemma holds. 
\end{proof}

\section{Proof of Lemma~\ref{lem:layerDP}}
\label{app:lem:layerDP}

{\bf Lemma~\ref{lem:layerDP}.} {\em Given a reported global profile $\theta'$, layered DPDM $\Mlay$ is $\epsilon \Delta \sigma$-differential private, where $\epsilon$ is the privacy parameter of $\Mlay$. }

\begin{proof}
Given two reported global profiles $\theta$ and $\theta'$  that differ in an arbitrary buyer $i$'s reported valuation such that $i$ reports $v_i$ in $\theta$ and $v_i'$ in $\theta'$, we consider the probabilities that $M(\theta)$ and $M(\theta')$ return a winner $w$. 

Without loss of generality, we assume that $w$ is in $L_{\ell}$, then we have
\begin{equation*}
   \begin{aligned}
   \frac{\Pr[M(\theta)=o_{w}]}{\Pr[M(\theta')=o_{w}]} &=
   \frac{\Pr_{L_{\ell}} \times \frac{\EXP^{\theta}(w)}{\EXP^{\theta}(L_{\ell})}}{\Pr_{L_{\ell}} \times \frac{\EXP^{\theta'}(w)}{\EXP^{\theta'}(L_{\ell})}} \\
   & = \frac{\EXP^{\theta}(w)}{\EXP^{\theta'}(w)} \frac{\EXP^{\theta'}(L_{\ell})}{\EXP^{\theta}(L_{\ell})}   
   \end{aligned}
\end{equation*}

When $i$ is not on layer $L_{\ell}$, $\frac{\Pr[M(\theta)=o_{w}]}{\Pr[M(\theta')=o_{w}]}=1 \leq \exp(\epsilon \Delta \sigma)$. Otherwise, when $i$ is on layer $L_{\ell}$, we consider two cases. \\
(1) $v_i < v_i'$. As $\sigma(\cdot)$ is non-decreasing in $v_i$, the first ratio is at most $1$. Then we have 
\begin{align*}
   \frac{\Pr[M(\theta)=o_w]}{\Pr[M(\theta')=o_w]} &\leq \frac{\EXP^{\theta'}(L_{\ell})}{\EXP^{\theta}(L_{\ell})} \\
   & \leq \frac{\sum_{j \in L_{\ell} } \exp(\epsilon (\sigma(\theta, o_j)+ \Delta \sigma))}{\sum_{j \in L_{\ell} } \exp(\epsilon \sigma(\theta, o_j))} \\
   & \leq \exp(\epsilon \Delta \sigma)
\end{align*}
\\
(2) $v_i > v_i'$. In this case, the second ratio is at most $1$. Then we have 
\begin{align*}
   {\frac{\Pr[M(\theta)=o_w]}{\Pr[M(\theta')=o_w]}} &\leq \frac{\EXP^{\theta}(w)}{\EXP^{\theta'}(w)} 
   \leq \frac{\exp(\epsilon \sigma(\theta, o_w))}{\exp(\epsilon (\sigma(\theta, o_w) - \Delta \sigma))} \\
   &\leq \exp(\epsilon \Delta \sigma)
\end{align*}
\end{proof}

\section{Proof of Theorem~\ref{thm:sw}}
\label{app:thm:sw}

{\bf Theorem~\ref{thm:sw}} {\em Given a global profile $\theta$, the expected social welfare of layered DPDM $\Mlay$ is at least $\gamma_{d_{\max}}{\E_{\EMD}[sw_{\EMD}(\theta)]}$.}

\begin{proof}
Given a global profile $\theta$, the expected social welfare of $\Mlay$ is
\begin{equation*}
\begin{aligned}
\E_{\Mlay}[sw_{\Mlay}(\theta)] &=\sum_{i\in V}{\left(v_i \times \Pr^{\Mlay}_{i}(\theta_i)\right)} \\
&=\sum_{i\in V}{ v_i \frac{ \exp(\epsilon,\sigma(\theta,o_i))}{\sum_{j\in L_{d_i}}{\frac{1}{\gamma_{d_i}}} \exp(\epsilon,\sigma(\theta,o_j))}}\\
&\geq \gamma_{d_{\max}} \sum_{i\in N}{ v_i \frac{ \exp(\epsilon,\sigma(\theta,o_i))}{\sum_{j\in L_{d_i}}{\exp(\epsilon,\sigma(\theta,o_j))}}} \\
&\geq \gamma_{d_{\max}} \sum_{i\in N}{ v_i \frac{ \exp(\epsilon,\sigma(\theta,o_i))}{\sum_{j\in V}{\exp(\epsilon,\sigma(\theta,o_j))}}}\\
&= \gamma_{d_{\max}} \E_{\Mlay}[sw_{\Mlay}(\theta)]
\end{aligned}
\end{equation*}
\end{proof}

\end{document}